\begin{document}

\title{State-injection schemes of quantum computation in Spekkens' toy theory}

\author{Lorenzo Catani}\email{lorenzo.catani.14@ucl.ac.uk}

\author{Dan E. Browne}\affiliation{University College London, Physics and Astronomy department, Gower St, London WC1E 6BT, UK}

\begin{abstract} 
Spekkens' toy theory is a non-contextual hidden variable model with an epistemic restriction, a constraint on what the observer can know about the reality. In reproducing many features of quantum mechanics in an essentially classical model, it clarified our understanding of what behaviour can be truly considered intrinsically quantum.
In this work, we show that Spekkens' theory can be also used to help understanding aspects of quantum computation - in particular an important subroutine in fault tolerant quantum computation called state injection. State injection promotes fault tolerant quantum circuits, which are usually limited to the classically efficiently simulatable stabilizer operations, to full universal quantum computation. We show that the limited set of fault tolerant operations used in standard state injection circuits can be realised within Spekkens' theory, and that state-injection leads to non-classicality in the form of contextuality. To achieve this, we extend prior work connecting Spekkens' theory and stabilizer quantum mechanics, showing that sub-theories of the latter can be represented within Spekkens' theory, in spite of the contextuality in qubit stabilizer quantum mechanics.  The work shines new light on the relationship between quantum computation and contextuality.
\end{abstract}

\maketitle

Spekkens' toy theory is a non-contextual hidden variable model made to advocate the epistemic view of quantum mechanics, where quantum states are seen as states of incomplete knowledge about a deeper underlying reality \cite{Spek1,Spek2}. The idea of the model is to reproduce quantum theory through a phase-space inspired theory with the addition of a constraint on what an observer can know about the ontic state (identified with a phase space point) describing the reality of a system. In the case of odd-dimensional systems the toy theory has been proven to be \emph{operationally equivalent} to qudit stabilizer quantum mechanics \cite{Ours}, where the latter is a subtheory of quantum mechanics restricted to eigenstates of tensors of Pauli operators, Clifford unitaries and Pauli measurement observables \cite{GottesPhD}. ``Operationally equivalent'' means that the two theories predict the same statistics of outcomes, given the same states, transformations and measurements. The equivalence between Spekkens' toy theory and qudit stabilizer quantum mechanics can be proven by representing qudit stabilizer quantum mechanics using Gross' Wigner functions \cite{Gross}. These turn out to be exactly equivalent to Spekkens' epistemic states and measurements. The measurement update rules are consistent and positiveness-preserving, while Clifford gates are mapped into consistent symplectic affine transformations \cite{Ours}.

In the case of qubits, the above equivalence does not hold. The toy theory is non-contextual by construction, whereas qubit stabilizer quantum mechanics shows state-independent contextuality, as witnessed by the Peres-Mermin square argument \cite{GHZ,Mermin,Peres}. This is reflected in the impossibility of finding a non-negative Wigner function that maps qubit stabilizer quantum mechanics into Spekkens' toy theory \cite{Wootters,Galvao}. Note that the non-negativity is needed in order to interpret the epistemic states and measurements as well-defined probabilities and thus state the operational equivalence.
Even if the toy theory and qubit stabilizer quantum mechanics are not operationally equivalent, some restricted versions of them do show the same statistics. Our aim is to identify the subtheories of Spekkens' toy theory that are operationally equivalent to subtheories of qubit stabilizer quantum mechanics. 
These will be closed subsets of operations, states and measurements in stabilizer quantum mechanics where states and measurements can be non-negatively represented by covariant Wigner functions.

In this work we use Spekkens' subtheories for an application in the field of quantum computation. More precisely, we use them to study state-injection schemes of quantum computation \cite{State-Injection}.
The latter are part of one of the currently leading models of fault tolerant universal quantum computation (UQC). This model is composed of a ``free'' part, which consists of quantum circuit that are efficiently simulatable by a classical computer (usually stabilizer circuits), and the injection of so-called ``magic'' resource states (that are usually distilled from many copies of noisy states through magic state distillation \cite{BravijKitaev}) that enable universal quantum computation.
In 2014  Howard et al. \cite{Howard} proved that in a state-injection scheme for qudits of odd prime dimensions, with the free part composed by stabilizer circuits, the contextuality possessed \emph{solely} by the magic resource is a necessary resource for universal quantum computation. 
In terms of systems of dimensions $2$ a similar result due to Delfosse et al. \cite{Delfosse} was derived for rebits, where the classical non-contextual free part is composed of Calderbank-Steane-Shor (CSS) circuits \cite{CSScodes}. However, as already pointed out, an analogue version of Howard's result for qubits cannot be found, since qubit stabilizer quantum mechanics is already contextual. Nevertheless it has been proven \cite{Vega} that in any state-injection scheme for qubits where we get rid of the state-independent contextuality, the contextuality possessed \emph{solely} by the magic resource is necessary for universal quantum computation. A more complete version of this result is also treated in \cite{Rauss2}, where a general framework for state-injection schemes of qubits with contextuality as a resource is provided. 

More precisely, in \cite{Rauss2}, Raussendorf et al. develop a framework for building non-negative and non-contextual subtheories of qubit stabilizer quantum mechanics via the choice of a phase function $\gamma(\lambda)$ on phase-space points, which defines the Weyl operators and consequently the Wigner functions (see equation \eqref{wigner}). They require that the allowed free measurements preserve the non-negativity of the Wigner function. In principle this requirement constrains the number of allowed observables. For this reason they also require tomographic completeness, \emph{i.e.} that any state can be fully measured by the observables allowed in the free part of the scheme, which guarantees that the set of free observables is large enough for the state-injection scheme to work. 
Unlike the case of measurements, they allow gates that introduce negativity in the Wigner functions, \emph{i.e.} non-covariant gates, the reason being that the gates can always be absorbed in the measurements without altering the outcome distribution of the computation. 
Furthermore, in \cite{WallmanBartlet}, Wallman and Bartlett address the issue of finding the subtheories of qubit quantum mechanics that are non-negative in certain quasiprobability representation (and so are classically simulatable and correspond to non-contextual ontological models). They construct the so-called $8-$state model, which can be seen as a generalisation of Spekkens' toy theory with an enlarged ontic space. The non-negativity for states and measurements is guaranteed by considering both the possible Wigner representations of a qubit \cite{Galvao}.  

It is important to point out that in the mentioned frameworks of \cite{Vega} and \cite{Rauss2}, the definition of state-injection schemes is broader than the one we consider here and it also includes schemes based on measurement-based quantum computation with cluster states \cite{MBQC}. We here consider only state-injection schemes as developed by Zhou et al. in \cite{State-Injection} -- defined in \ref{definition} --  like the ones in \cite{Howard} and \cite{Delfosse}, and we show that Spekkens' subtheories are an intuitive and effective tool to treat these cases.
We first use Spekkens' subtheories to represent the non-contextual free part of the known examples of state-injection schemes, both for qubits and qudits, where contextuality arises as a resource \cite{Howard,Delfosse}. These can be unified in the following framework (figure \ref{Ours}): $Spekkens' \; subtheory \; + \;Magic \; state(s)\;\rightarrow\; UQC$. 
Secondly, we prove in theorem \ref{Main} that qubit SQM can be obtained from a Spekkens' subtheory via state-injection since all its objects that do not belong to the Spekkens' subtheory, namely non-covariant Clifford gates, can be injected, where the circuit needed for the injection is always made of objects belonging to the Spekkens' subtheory. This means that Spekkens' subtheories contain all the tools for performing state-injection schemes of quantum computation. There is no need to consider bigger non-covariant subtheories in the free part. 

The proof of theorem \ref{Main} suggests a novel state-injection scheme, where contextuality is a resource, based on injection of $CCZ$ states. State-injection schemes with the related Toffoli (CCNOT) gates are already known \cite{Shor,Aharonov,Shi,Eastin,Jones1,Jones2,Paetznick}, but our scheme differs from them as our non-contextual free part of the computation -- a strict subset of the CSS rebit subtheory considered in \cite{Delfosse} -- is such that it is not possible to remove any object from it without denying the possibility of obtaining universal quantum computation via state-injection. The price to pay for this minimality is the injection of the control-Z state, $CZ\left |++\right\rangle,$ too (which also provides the Hadamard gate).
By analysing this example, we can associate different proofs of contextuality to the injection of different states, making a link between different classes of contextuality experiment and different states. More precisely, we show how the Clifford non-covariant $CZ$ gate (as well as the phase gate $S$) can provide proofs of the Peres-Mermin square argument and the GHZ paradox. Moreover, the injection of the $T\left |+\right\rangle$ magic state, where $T$ is the popular $\frac{\pi}{8}$ non-Clifford gate, allows, in addition to the previous proofs of contextuality (as $T^2=S$), also to obtain the maximum quantum violation of the CHSH inequality \cite{CHSH}.

\begin{figure}
\centering

{\includegraphics[width=.45\textwidth,height=.2\textheight]{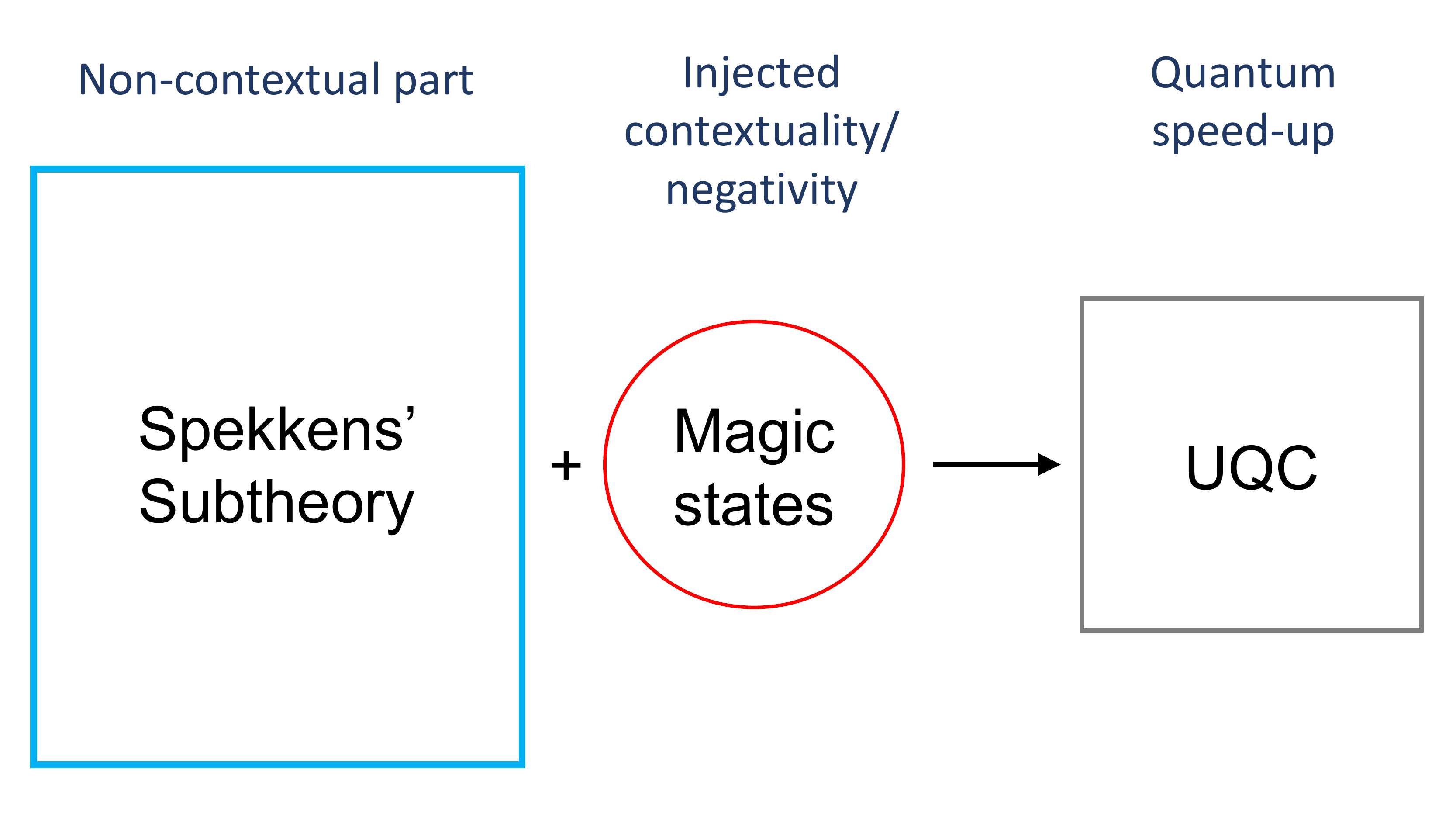}}
\caption[Computational scheme.]{\footnotesize{\textbf{Computational scheme.} Schematic representation of the computational scheme of the paper.}}
\label{Ours}
\end{figure}

In the remainder of the paper we start by covering some preliminary material on Spekkens' toy theory, Wigner functions and state-injection schemes in section \ref{zero}. We then provide the definition of a Spekkens' subtheory in section \ref{one}. In section \ref{two} we describe Howard's and Delfosse's cases for qudits \cite{Howard} and rebits \cite{Delfosse} respectively and we prove that they fit in our framework where the free parts of the computation, qudit stabilizer quantum mechanics and CSS rebits respectively, are Spekkens' subtheories. We then set the instructions to construct a Spekkens' subtheory from the choice of a non-negative Wigner function in section \ref{three}. We do so in line with \cite{Rauss2} and we find that the main difference from Raussendorf et al's formulation (and also from the 8-state model) consists of demanding for the covariance of the Wigner function with respect to the allowed gates. Moreover we do not demand for tomographic completeness. 
By exploiting this comparison we then prove theorem \ref{Main}. It basically shows that any state-injection scheme can be obtained from our framework, since any object not present in the considered Spekkens' subtheory can be injected by using an injection scheme made of objects in the Spekkens' subtheory. In section \ref{four} we provide a novel example of state-injection for qubits (rebits) based on CCZ magic states. We analyse the presence of different proofs of contextuality in correspondence of different state injected gates in section \ref{five} and we recap all the results and the future directions in the conclusion section.

\section{Preliminaries}
\label{zero}

In this work we focus on subtheories of quantum mechanics that we call Spekkens' subtheories, and their relation with state-injection schemes of quantum computation. 
We here recall some basic notions and definitions characterising Spekkens' toy theory, Wigner functions and state-injection schemes that will be useful for our purposes.

\subsection{Spekkens' toy theory.}
\label{zero_one}
Spekkens' toy theory is a non-contextual hidden variable theory with an \textit{epistemic restriction}, \emph{i.e.} a restriction on what can be known about the hidden variables (or \textit{ontic states}) describing the physical system \cite{Spek1,Spek2,Ours}. Ontic states are points $\lambda$ in  phase space, that we here assume to be discrete, $\Omega= \mathbb{Z}_d^n,$ where $n$ is the number of the $d$-dimensional subsystems composing the system. 
An Observable $\Sigma$ is defined as a linear functional in  phase space, \emph{i.e.} it takes the form $\Sigma=\sum_m(a_mX_m+b_mP_m),$ where $X_m,P_m$ denote fiducial variables, like position and momentum, that label the phase space, $a_m,b_m\in \mathbb{Z}_d$ and $m\in{0,\dots, n-1}.$ In the following we denote the ontic states and observables when considered in their vectorial representation, and vectors in general, with bold characters.
The outcome $\sigma$ of any observable measurement $\bold{\Sigma}=(a_0,b_0,\dots,a_{n-1},b_{n-1})$ given the ontic state $\boldsymbol{\lambda}=(x_0,p_0,\dots,x_{n-1},p_{n-1}),$ where $x_j,p_j,a_j,b_j\in \mathbb{Z}_d$ for every $j\in\{0,\dots,n-1\},$ is given by their inner product: $\sigma=\bold{\Sigma}^T\boldsymbol{\lambda}=\sum_j(a_jx_j+b_jp_j),$ where all the arithmetic is over $\mathbb{Z}_d.$

The epistemic restiction is called the \textit{classical complementarity principle} and it states that two observables can be jointly known only when their Poisson bracket is zero. 
This can be simply recast in terms of the symplectic inner product: $[\Sigma_1,\Sigma_2]\equiv\bold{\Sigma_1}^TJ\bold{\Sigma_2}=0,$ where $J= \bigoplus_{j=1}^{n} \begin{bmatrix} 0 & 1 \\ -1 & 0 \end{bmatrix}_j.$ 
A set of observables that can be jointly known by the observer represents a sub-space of $\Omega,$ known as an \emph{isotropic subspace} and denoted as $V=span\{\Sigma_1,\dots,\Sigma_n\}\subseteq\Omega,$ where $\Sigma_i$ indicates one of the generators of $V$.

The observer's best description of the physical system is a probability distribution $p(\lambda)$ over $\Omega$ which is called the \emph{epistemic state}. 
It is defined as \begin{equation}\label{epistemic} P_{(V,\bold{w})}(\lambda)=\frac{1}{N}\delta_{V^{\perp}+\bold{w}}(\lambda),\end{equation} where the perpendicular complement of $V$ is, by definition, $V^{\perp}=\{a\in\Omega \; |\; \bold{a}^T \bold{b}=0\; \forall\; b\in V \}.$  The subspace of known variables $V$ specifies the observables $\Sigma_j$ that are known and the evaluation shift vector $\bold{w}\in V$ the values $\sigma_j$ that they take, $\bold{\Sigma}_j^T\cdot \bold{w}=\sigma_j.$ 
The indicator function $\delta_{V^{\perp}+\bold{w}}(\lambda)$ takes value $1$ when $\lambda$ belongs to the set $V^{\perp}+\bold{w}$ and $0$ otherwise; $N$ is a normalisation factor. 

The allowed transformations in Spekkens' theory are the ones that preserve the epistemic restriction, \emph{i.e.} the symplectic affine transformations $G$ in  phase space (in general a subset of the permutations), namely $G(\lambda)=S\boldsymbol{\lambda} + \bold{a},$ where $S$ is a symplectic matrix and $\bold{a}\in\Omega$ a translation vector. The elements $\Pi_k$ of a sharp measurement $\Pi,$ where the integer $k$ denotes the outcome associated with the measurement, have an epistemic representation analogue to the states of the form of equation \eqref{epistemic}. Here, we are in a dual representation between states and observables, where we denote with $V_{\Pi}$ and $\bold{r}_k$ the subspace of known observables and the evaluation shift vector associated with the $k$-th element of the sharp measurement, respectively. Figure \ref{SpekSubToffoli} provides a graphic representation of epistemic states, observables and evolutions for some particular examples of quantum states, observables and gates. 

\begin{figure}
\centering
\subfloat[][\label{a}]
{\includegraphics[width=.48\textwidth,height=.19\textheight]{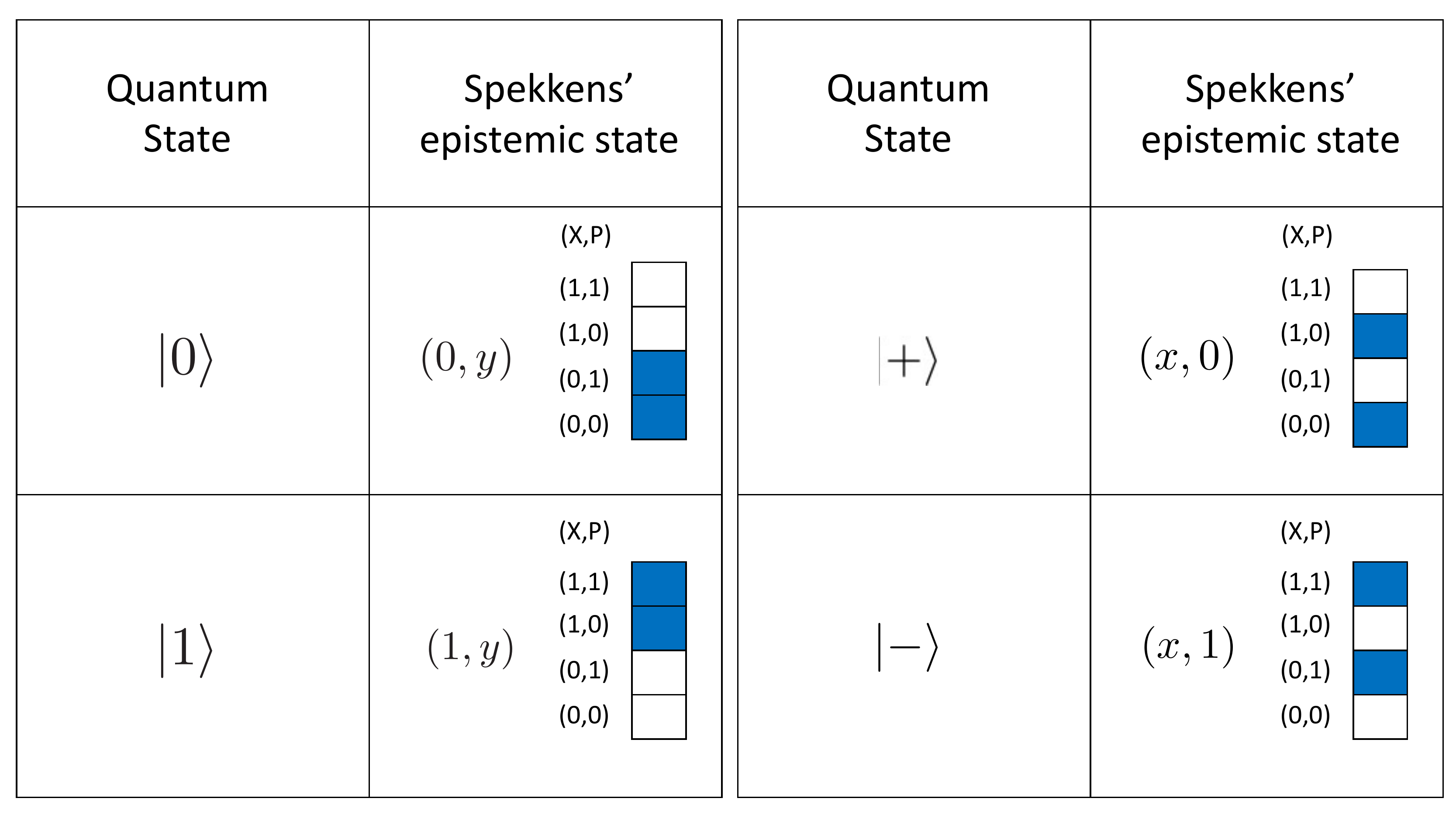}}\hfil
\subfloat[][\label{b}]
{\includegraphics[width=.4\textwidth,height=.165\textheight]{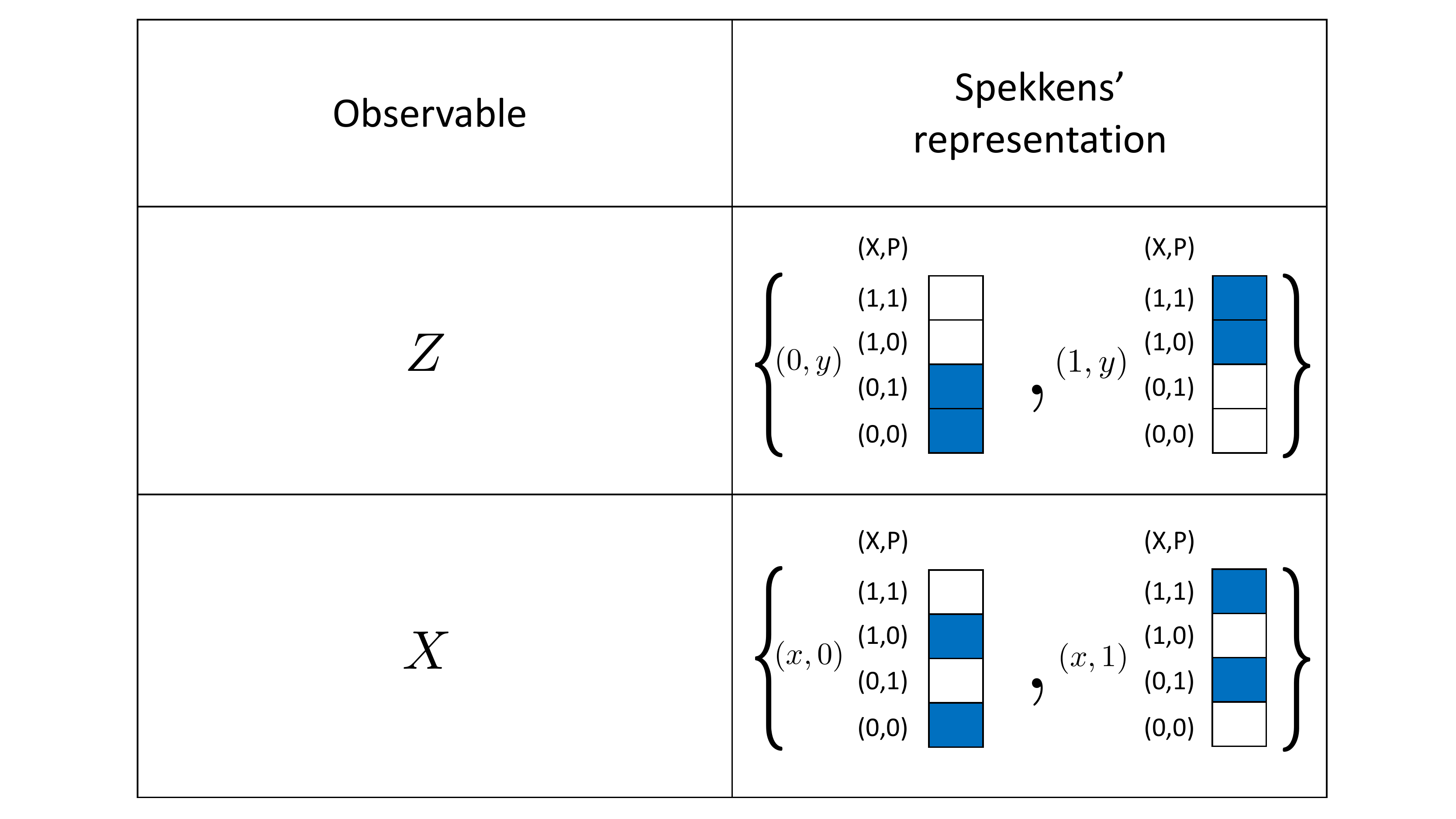}}\hfil
\subfloat[][\label{c}]
{\includegraphics[width=.48\textwidth,height=.19\textheight]{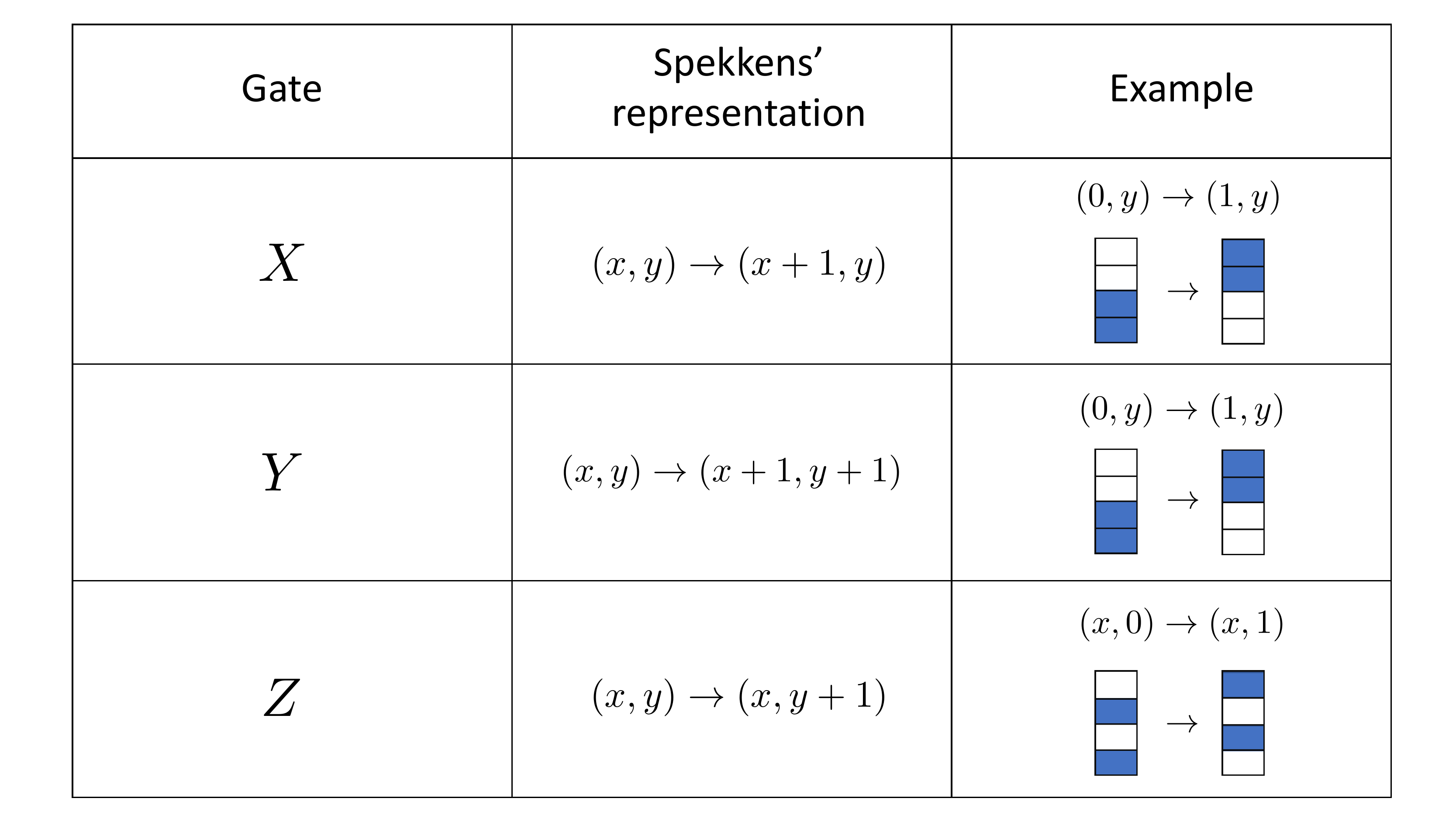}}\hfil
\subfloat[][\label{d}]
{\includegraphics[width=.48\textwidth,height=.19\textheight]{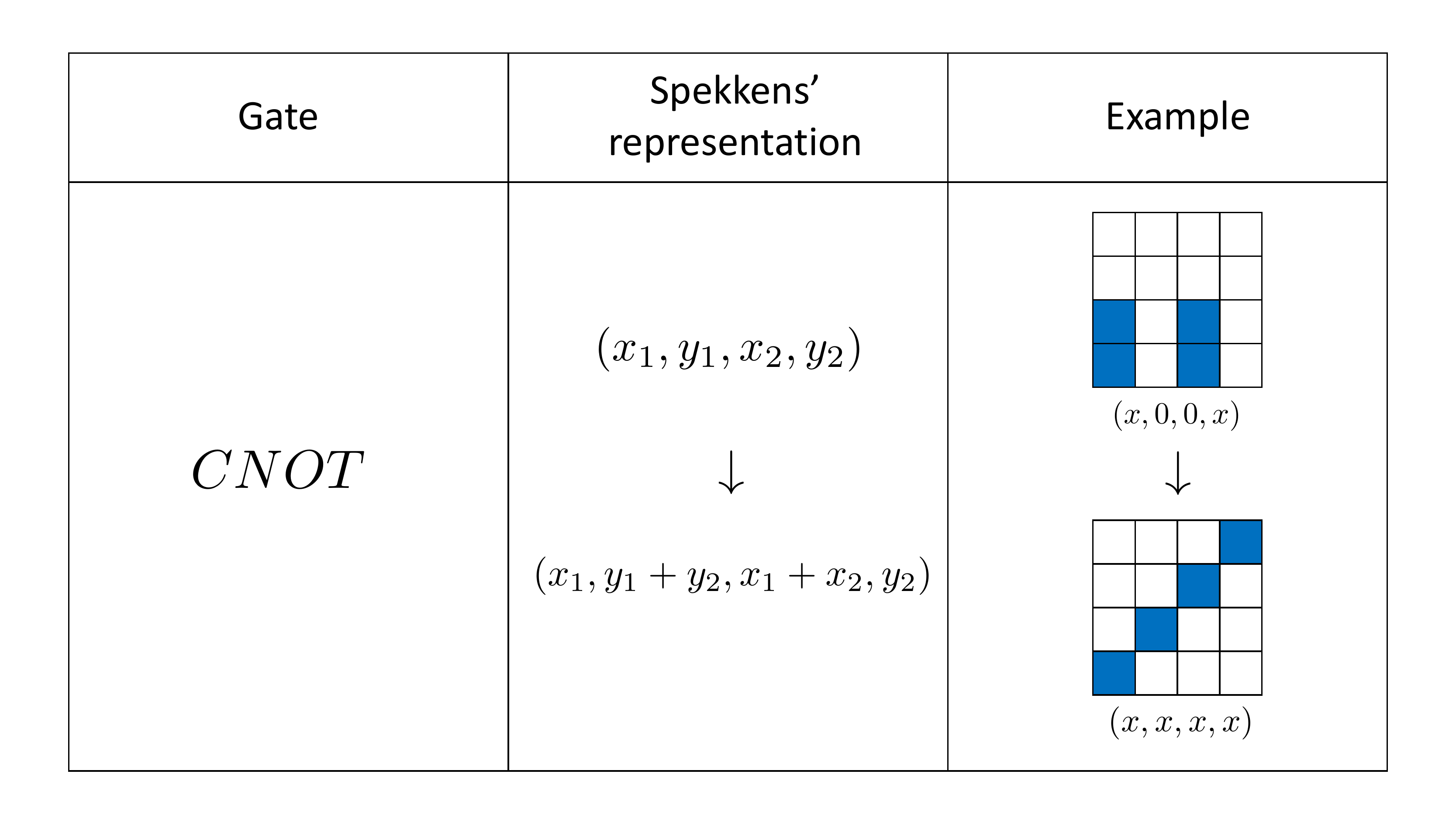}}
\caption[Representation of a non-contextual subtheory of qubit stabilizer quantum mechanics in Spekkens' toy model.]{\footnotesize{\textbf{Representation of a non-contextual subtheory of qubit stabilizer quantum mechanics in Spekkens' toy model.} In the figure above the allowed pure states \ref{a}, observables \ref{b} and gates \ref{c},\ref{d} of the non-contextual subtheory of qubit stabilizer quantum mechanics considered in the proof of theorem \ref{Main} are represented in Spekkens' toy model, according to the definitions of epistemic states, observables and evolutions (gates) defined in subsection \ref{zero_one}.  We have also indicated the probability distributions associated to the epistemic states, where $x,y\in\mathbb{Z}_d,$ and how the gates act on them. In the examples of figures \ref{c},\ref{d} we have considered the scenarios corresponding to acting with $X,Y$ on $\left |0\right\rangle,$ with $Z$ on $\left |+\right\rangle$ and with $CNOT$ on $\left |+0\right\rangle$ (thus obtaining the Bell state $\frac{\left |00\right\rangle +\left |11\right\rangle}{\sqrt{2}}).$ More detailed examples can be found in \cite{Spek2} and \cite{Ours}.}}
\label{SpekSubToffoli}
\end{figure}

\subsection{Wigner functions.}
\label{zero_two}
Wigner functions are a way of recasting quantum mechanics in the phase space framework\cite{Wigner,Wootters,Galvao}. They are called quasi-probability distributions,  not proper probability distributions, because they can take also negative values. Nevertheless their marginals represent probability distributions of measurement outcomes. The feature of negativity is usually associated with a signature of non-classicality \cite{Veitch,Spek4}.
We here define Wigner functions following \cite{Rauss2}.
The Wigner function of a quantum state $\rho$ 
is defined by the function $\gamma,$
\begin{equation}\label{wigner}W^{\gamma}_{\rho}(\lambda)=Tr(A^{\gamma}(\lambda)\rho),\end{equation} where the phase-point operator is \begin{equation}\label{phasepoint} A^{\gamma}(\lambda)= \frac{1}{N_{\Omega}}\sum_{\lambda' \in\Omega}\chi([\lambda,\lambda'])T^{\gamma}(\lambda'),\end{equation} and the Weyl operator is defined by \begin{equation}\label{weyl} T^{\gamma}(\lambda)=w^{\gamma(\lambda)}Z(\lambda_Z)X(\lambda_X),\end{equation} where the phase-space point is $\lambda=(\lambda_Z,\lambda_X)\in \Omega.$ We will omit the superscript $\gamma$ in the future in order to soften the notation. The normalisation $N_{\Omega}$ is such that $Tr(A(\lambda))=1.$
The functions $\chi$ and $w$ will be appropriately characterised in the case of qubits and qudits in section \ref{two}, as well as $\gamma.$ 
The operators $Z(\lambda_Z),X(\lambda_X)$ represent the (generalised) Pauli operators, \begin{equation}X(\lambda_X)=\sum_{\lambda'_X\in\mathbb{Z}_d}\left | \lambda'_X-\lambda_X \right\rangle \left\langle \lambda'_X \right |,\end{equation}\begin{equation}Z(\lambda_Z)=\sum_{\lambda_X\in\mathbb{Z}_d}\chi(\lambda_X\cdot\lambda_Z)\left | \lambda_X \right\rangle \left\langle \lambda_X \right |.\end{equation} 
The most important property of the Wigner functions for this work is the property of covariance, which means that (in accordance with the definition used by David Gross in \cite[theorem~7]{Gross}), for all  allowed states $\rho$ in the theory,  \begin{equation}\label{covariance}W_{U\rho U^{\dagger}}(\lambda)=W_{\rho}(S\boldsymbol{\lambda}+\bold{a}),\end{equation} where $S$ is a symplectic transformation and $\bold{a}$ is a translation vector. This property guarantees that the transformations in quantum mechanics correspond to symplectic affine transformations in phase space.

\subsection{State-injection schemes of quantum computation.}
\label{zero_three}
In this work we consider state-injection schemes of quantum computation as developed in \cite{State-Injection}. They represent one of the leading models for fault tolerant universal quantum computation when combined with the magic state distillation procedure due to Bravyi and Kitaev in \cite{BravijKitaev}. The latter allows to distill non-stabilizer magic states from noisy copies of quantum states with a high threshold for the error rate (about $14.6\%$), given a setting where only Clifford gates are fault tolerant. 
The key idea of state-injection is that a non-Clifford gate can be implemented with the combination of the magic state, a Clifford group circuit and Pauli measurements. More precisely, we can define state-injection schemes for implementing any diagonal unitary gate $U$ as follows.

\newtheorem{defn}{Definition}
\label{definition}
\begin{defn}[Zhou-Leung-Chuang state-injection \cite{State-Injection}]
A  \textbf{state injection} of an $n$-qubit unitary gate $U$ is a quantum circuit implementing $U$ composed of the following elements (figure \ref{QCSI}). 
\begin{itemize}
\item The injected state $U| + \rangle^{\otimes n}.$
\item $n$ CNOT gates, applied transversally.
\item $n$ Pauli $Z$ measurements, with the output of the $j$th measurement denoted as $s_j=(-1)^{m_j},$ where $m_j\in\{0,1\}.$
\item The correction gate $UX^mU^{\dagger},$ where $m=m_1\ldots m_n$ is the bitstring of meausurement outcomes and $X^m=X^{m_1}\otimes\cdots\otimes X^{m_n}.$ 
\end{itemize}

\end{defn}

As was proved in \cite{State-Injection}, any diagonal gate $U$ can be implemented via a state injection circuit of this form. However, for the injection of $U$ to succeed deterministically, the unitary correction $UX^mU^{\dagger}$ must be implementable in the model.
Figure \ref{QCSI} depicts the state-injection scheme just defined. 

\begin{figure}
\centering
{\includegraphics[width=.45\textwidth,height=.2\textheight]{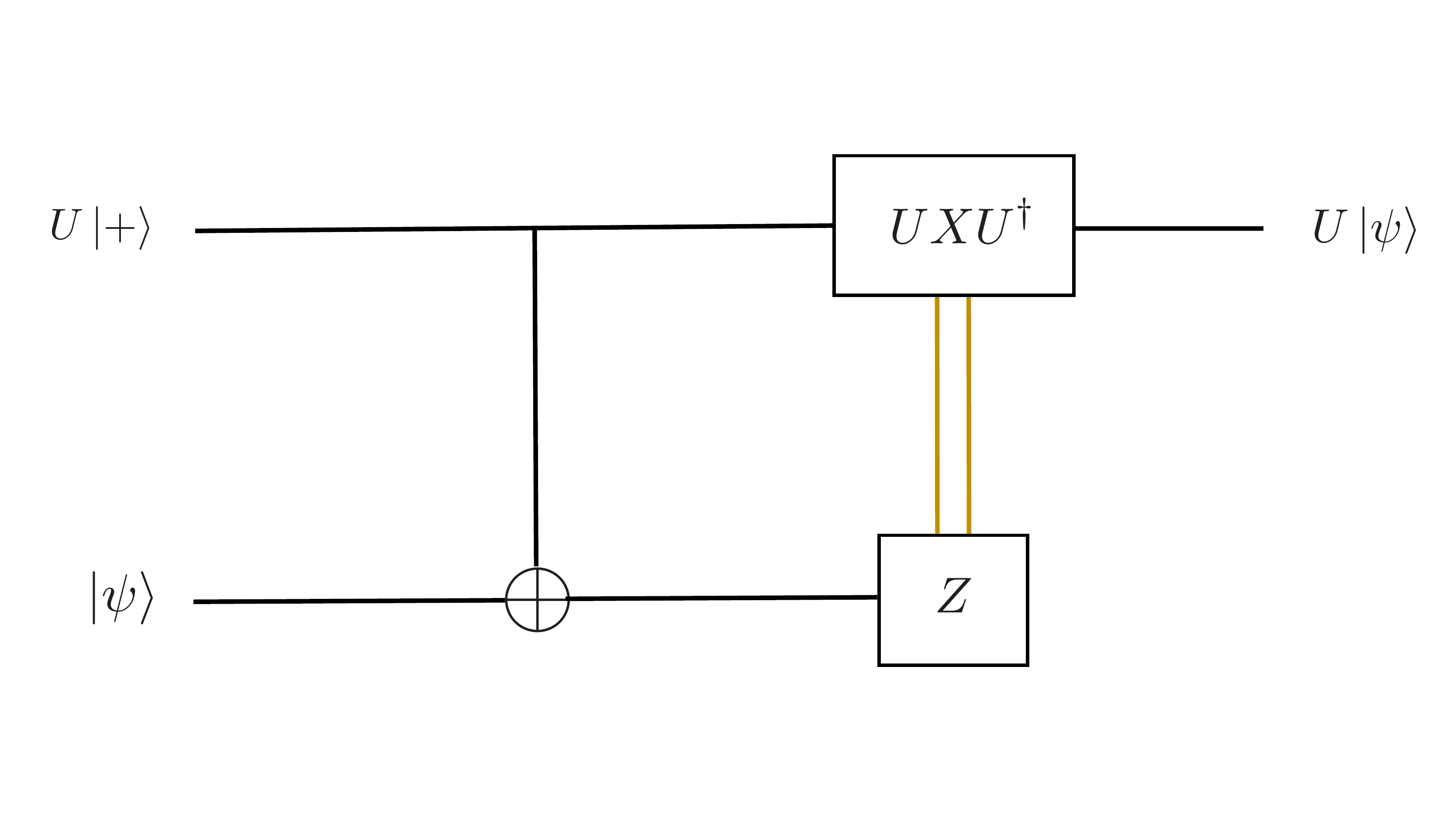}}
\caption[state-injection schemes of quantum computation.]{\footnotesize{\textbf{state-injection schemes of quantum computation.} The state-injection schemes that we consider in this work are the ones developed by Zhou, Leung and Chuang in \cite{State-Injection}. The diagonal gate $U$ can be injected in the circuit by using objects that are allowed in the free part of the computation. The injected state is $U\left | + \right \rangle,$ which is subjected to a controlled not with the input state $\left | \psi \right \rangle.$ Conditioned on the outcome of the measurement of the Pauli $Z$ on the state $\left | \psi \right \rangle$ after the $CNOT,$ the correction $UXU^{\dagger}$ is applied to the state $U\left | + \right \rangle.$ At the end we obtain the gate $U$ applied to the input state $\left | \psi \right \rangle.$}}
\label{QCSI}
\end{figure}

\section{Characterisation of Spekkens' subtheories}
\label{one}

A Spekkens' subtheory is defined as a set of quantum states, transformations and measurements, $(\mathcal{S},\mathcal{T},\mathcal{M}),$ which satisfies the following conditions.

\begin{enumerate}

\item \label{Sub} \emph{Subtheory.} The set must be \emph{closed}, which means that any allowed gate cannot bring from one allowed state to a non-allowed one. 
\begin{equation} \label{gates}\forall \; U\in \mathcal{T}, U\rho U^{\dagger}\in \mathcal{S} \; \forall \rho\in \mathcal{S}.\end{equation} 

\item \label{SR} \emph{Spekkens representability.} There must be an operational equivalence between the subtheory of quantum mechanics $(\mathcal{S},\mathcal{T},\mathcal{M}),$ defined by sets of quantum states, transformations and measurements, and a subtheory of Spekkens' toy theory $(\mathcal{S}_s,\mathcal{T}_s,\mathcal{M}_s),$ defined by sets of epistemic states, symplectic affine transformations and measurements as defined in the previous section. 

The operational equivalence means that the statistics of the two subtheories $(\mathcal{S},\mathcal{T},\mathcal{M})$ and $(\mathcal{S}_s,\mathcal{T}_s,\mathcal{M}_s)$ are the same. We state this equivalence by finding a \emph{non-negative} Wigner function that maps the states $\rho$ and the measurement elements $\Pi_k$ in $(\mathcal{S},\mathcal{M})$ to epistemic states and measurement elements in $(\mathcal{S}_s, \mathcal{M}_s),$ \emph{i.e.}
\begin{equation} \label{state} W_{\rho}(\lambda)=\frac{1}{N}Tr(\rho A(\lambda))=\frac{1}{N}\delta_{(V^{\perp}+\bold{w})}(\lambda);\end{equation}
\begin{equation} \label{meas} W_{\Pi_k}(\lambda)=\frac{1}{N'}Tr(\Pi_k A(\lambda))=\frac{1}{N'}\delta_{(V_{\Pi}^{\perp}+\bold{r}_k)}(\lambda).\end{equation}
The $N,N'$ are the normalisation factors so that $\sum_{\lambda\in\Omega}W(\lambda)=1$ for all the above Wigner functions. 
Notice at this point that the measurement update rules in \cite{Ours} guarantee that also a state after a measurement is non-negatively represented if the measurement and the original state have non-negative Wigner functions, as they involve only sums and products of Wigner functions. Moreover we are considering subtheories with duality between states and measurement elements, therefore it is enough to check only the properties of the Wigner functions of states (or measurements) to guarantee Spekkens representability.

The operational equivalence in terms of transformations is implied if the Wigner function \eqref{state} satisfies the property of covariance \eqref{covariance} for the allowed unitaries $U\in\mathcal{T}$, which guarantees that the transformations in quantum mechanics correspond to transformations that preserve the epistemic restriction in Spekkens' theory.
Notice that the property of covariance is defined in terms of Wigner functions and not directly in terms of the phase point operators. \footnote{This will make a difference only in section \ref{CSSsection} where we consider factorisable Wigner functions for the CSS rebit case.} Therefore we do not necessarily need to demand for the standard Wigner function of the transformation, defined as $W_{U}(\lambda|\lambda')=\frac{1}{N'}Tr(A(\lambda)UA(\lambda')U^{\dagger}),$ to be non-negative in all its elements, once the previous requirements, non-negativity and covariance of $W_{\rho}$, are satisfied. The transition matrix corresponding to the allowed permutation of the phase points can be always found, as shown by the following lemma.

\newtheorem{WignerFunctionTransformation}{Lemma}
\begin{WignerFunctionTransformation}\label{WignerFunctionTransformation}
Given non-negative Wigner function representations, $W_{\rho},W_{\rho'},$ of any two allowed states $\rho,\rho' \in \mathcal{S}$ such that $\rho'=U\rho U^{\dagger},$ where $U\in \mathcal{T},$ and covariance holds, \emph{i.e.} $W_{\rho'}(\lambda)=W_{\rho}(S\boldsymbol{\lambda}+\bold{a}),$ there always exists a (non-negative) transition matrix $P_U: \Omega \times \Omega \rightarrow [0,1]$ representing the transformation $U\in  \mathcal{T},$ \begin{equation} \label{transf} P_{U}(\lambda|\lambda')=\frac{1}{N''}\delta_{\lambda,S\boldsymbol{\lambda}'+\bold{a}},\end{equation} where $N''$ is the normalisation factor, such that  \begin{equation}\label{transproof}W_{\rho'}(\lambda)=\sum_{\lambda'\in\Omega}P_U(\lambda|\lambda')W_{\rho}(\lambda').\end{equation}
\end{WignerFunctionTransformation}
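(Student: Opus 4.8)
The plan is to simply exhibit the transition matrix: take $P_U$ to be the deterministic permutation of phase space induced by the affine symplectic map that already appears in the covariance hypothesis, and then verify \eqref{transproof} by a one-line change of summation variable. So the proof is constructive and essentially reduces to bookkeeping.

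First I would record that the symplectic matrix $S$ is invertible over $\mathbb{Z}_d$: applying $\det$ to $S^TJS=J$ gives $\det(S)^2=1$, so $\det S$ is a unit in $\mathbb{Z}_d$, hence $\boldsymbol{\lambda}'\mapsto S\boldsymbol{\lambda}'+\bold{a}$ is a bijection of $\Omega=\mathbb{Z}_d^{2n}$. I then define $P_U(\lambda|\lambda')=\delta_{\lambda,\,S\boldsymbol{\lambda}'+\bold{a}}$ as in \eqref{transf} (the normalisation $N''$ is trivially $1$, since for each fixed $\lambda'$ exactly one $\lambda$ is selected, so each column is already a probability vector; this also makes $P_U$ manifestly a map $\Omega\times\Omega\to\{0,1\}\subseteq[0,1]$). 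Substituting this into the right-hand side of \eqref{transproof} gives $\sum_{\lambda'\in\Omega}\delta_{\lambda,\,S\boldsymbol{\lambda}'+\bold{a}}\,W_\rho(\lambda')$; because the affine map is a bijection, the Kronecker delta singles out the unique preimage of $\lambda$, so the sum collapses to $W_\rho$ evaluated at that preimage, which by covariance \eqref{covariance} is exactly $W_{\rho'}(\lambda)$. That establishes \eqref{transproof}. (If one wants the statement of \eqref{transf} to read literally with $S$ rather than $S^{-1}$, the same computation goes through with the roles of $S$ and $S^{-1}$ adjusted consistently between \eqref{covariance} and \eqref{transf}; the content is only that the transition matrix is the indicator of the graph of the affine bijection.)

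The computation itself is routine; the only point that needs care is the direction bookkeeping — making sure the preimage picked out by $\delta_{\lambda,\,S\boldsymbol{\lambda}'+\bold{a}}$ matches the argument $S\boldsymbol{\lambda}+\bold{a}$ occurring in the covariance relation — and invoking invertibility of $S$ so that the collapsed sum is genuinely over a singleton (which is also what forces $P_U$ to be column-stochastic). Non-negativity of $W_\rho$ and $W_{\rho'}$ is not used in the algebraic identity; it is only what makes both sides legitimate (quasi-)probability distributions, so that $P_U$ can be read as a well-defined stochastic evolution of the epistemic state — consistent with the fact that the allowed transformations in Spekkens' theory are precisely the symplectic affine maps on $\Omega$.
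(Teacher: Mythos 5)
Your proposal is correct and takes essentially the same route as the paper: the transition matrix is just the Kronecker-delta (permutation) matrix induced by the affine symplectic map supplied by covariance, and \eqref{transproof} follows by collapsing the sum over the unique preimage. You are in fact more explicit than the paper about the bookkeeping --- invertibility of $S$ over $\mathbb{Z}_d$, the trivial normalisation $N''=1$, and the $S$ versus $S^{-1}$ direction mismatch between \eqref{transf} and \eqref{covariance} as literally written --- but these are refinements of the same argument, not a different one.
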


\begin{proof}
A matrix made of non-negative elements $P_U(\lambda|\lambda')$ proportional to Kronecker deltas always exists because it corresponds to the transition matrix representing the permutation that brings $W_{\rho}$ to $W_{\rho'}$. More precisely, non-negative solutions $P_U(\lambda |\lambda')$ to the equations \eqref{transproof} for every $\lambda$, given the non-negative $W_{\rho}(\lambda'),W_{\rho'}(\lambda)$ defined in \eqref{wigner}, always exist. For every fixed $\lambda,$ the  $P_U(\lambda |\lambda')$ are vectors with all zero components apart from one, \emph{i.e.} they are proportional to Kronecker deltas. 
The covariance property \eqref{covariance} guarantees that this permutation corresponds to a symplectic affine transformation on the phase space points (independent on the state $\rho$ that $U$ is acting on). 
\end{proof}

The non-negative functions \eqref{state}, \eqref{meas} and \eqref{transf} can be interpreted as probability distributions and guarantee that the theories $(\mathcal{S},\mathcal{T},\mathcal{M})$ and  $(\mathcal{S}_s,\mathcal{T}_s,\mathcal{M}_s)$ are operationally equivalent, \emph{i.e.} they provide the same statistics: \begin{equation}\begin{split} p(k) & = Tr(\Pi_k U\rho U^{\dagger})\\ & =\sum_{\lambda\in\Omega}W_{\Pi_k}(\lambda)\sum_{\lambda'\in\Omega}P_{U}(\lambda|\lambda')W_{\rho}(\lambda').\end{split}\end{equation}

\end{enumerate}

To sum up, a Spekkens' subtheory is a (closed) subtheory of quantum mechanics whose states (and measurements) are represented by non-negative and covariant Wigner functions. 
We say that a Spekkens' subtheory is \emph{maximal} if the set $(\mathcal{S},\mathcal{T},\mathcal{M})$ is such that by adding either another state, gate or observable to the set of allowed states, transformations and observables contradicts at least one of the conditions above, \emph{i.e.} it is no longer a subtheory or Spekkens representable. We will also talk about \emph{minimal} non-contextual subtheories of stabilizer quantum mechanics meaning those subtheories that can no longer be used for state-injection schemes after the removal of just one object from them.

\section{Wigner functions for Spekkens' subtheories}
\label{two}

We now identify the functions $\gamma$ defining the Wigner functions in equation \eqref{wigner} that allow us to show that the known examples of state-injection schemes with contextuality as a resource, \cite{Howard} and \cite{Delfosse}, fit into the framework depicted in figure \ref{Ours}, \emph{i.e.} that the free parts of those schemes are Spekkens' subtheories.

\subsection{Qudit case}

In the case of qudits of odd dimensions Gross' theorem \cite{Gross} guarantees that there is a non-negative Wigner representation of all pure stabilizer states. This Wigner function is covariant and also Clifford transformations and Pauli measurements are non-negatively represented. Thus Gross' Wigner function proves the operational equivalence, in odd dimensions, between stabilizer quantum mechanics and the whole Spekkens' toy theory, as shown in \cite{Spek2} and \cite{Ours}. 

Gross' Wigner function for odd dimensional systems (qudits) is defined according to equation \eqref{wigner}, where $\chi(a)=e^{\frac{2\pi i}{d}a},$ for $a\in \mathbb{Z}_d,$ and $w^{\gamma(\lambda)}=\chi(-2^{-1}\gamma(\lambda)).$ The function $\gamma$ is given by $\gamma(\lambda)=\lambda_X\cdot \lambda_Z,$ where the ``$\cdot$'' denotes the inner product. 

In the scheme of Howard et al. \cite{Howard} where they prove that contextuality is a resource for universal quantum computation, the free part of the computation is given by stabilizer quantum mechanics in odd prime dimensions, which, by Gross' Wigner functions, is a maximal Spekkens' subtheory. \footnote{Stabilizer quantum mechanics in odd dimensions is the unique maximal Spekkens' subtheory, since it coincides with the whole Spekkens' theory.}

\subsection{Qubit case}
\label{CSSsection}

In the case of qubits an analogue of Gross' Wigner function for stabilizer quantum mechanics does not exist \cite{Wootters,Galvao}, as some negative stabilizer states are present for any possible choice of Wigner functions. This is related to the contextuality shown by qubit stabilizer quantum mechanics (see, for example, the Peres-Mermin square \cite{Peres}). 
Nevertheless it is possible to state a similar result to Howard et al.'s by restricting the free part of the computation to a strict non-contextual and positive subtheory of qubit stabilizer quantum mechanics, as shown by the result of Delfosse et al. in 2015 \cite{Delfosse}.

We start by describing the set of allowed states/gates/observables $(\mathcal{S}_r,\mathcal{T}_r,\mathcal{M}_r)$  considered by Delfosse et al. 
The set $\mathcal{S}_r,$ a subset of the stabilizer states, is composed by Calderbank-Steane-Shor (CSS) states \cite{CSScodes}, \emph{i.e.} stabilizer states $\left |\psi \right\rangle,$ whose corresponding stabilizer group $S(\left |\psi \right\rangle)$ decomposes into an $X$ and a $Z$ part; \emph{i.e.} $S(\left |\psi \right\rangle)=S_X(\left |\psi \right\rangle)\times S_Z(\left |\psi \right\rangle),$ where all elements of $S_X(\left |\psi \right\rangle)$ and $S_Z(\left |\psi \right\rangle)$ are of the form $X(\bold{q})$ and $Z(\bold{p}),$ respectively, where $\bold{q},\bold{p}\in \mathbb{Z}_2^n$. 
CSS states are the eigenstates of the allowed observables belonging to the set $M_r,$ \begin{equation}\label{CSSobservables} \mathcal{M}_r = \{X(\bold{q}),Z(\bold{p})|\bold{q},\bold{p} \in \mathbb{Z}_2^n\}.\end{equation}
The set of allowed transformations is composed by the CSS preserving gates, subset of the Clifford group $C_n$ (which is the group of unitaries that maps Pauli operators to Pauli operators by conjugation), \begin{equation}\begin{split} \label{CSSgates} \mathcal{T}_r & =\{g\in C_n | g\left |\psi \right\rangle \in \mathcal{S}_r, \forall \left |\psi \right\rangle \in \mathcal{S}_r\}\\ &=\left\langle \bigotimes_{i=1}^n H_i,CNOT(i,j),X_i,Z_i\right\rangle,\end{split}\end{equation} where $i,j\in\{1,2,\dots,n\}$ and $i\neq j.$
The universal quantum computation is reached by injecting two particular magic states to the free subtheory of CSS rebits just described \cite{Delfosse}.
 
The Wigner function used by Delfosse et al. to prove their result is given by \begin{equation} \label{DelfosseWf} A_r(\lambda)=\frac{1}{2^n}\sum_{T(\lambda')\in \mathcal{A}}(-1)^{\left\langle \lambda,\lambda' \right\rangle}T(\lambda'),\end{equation}
where $T(\lambda)=Z(\bold{p})X(\bold{q}),$ $\lambda=(\bold{q},\bold{p}),$ and $\mathcal{A}=\{T(\lambda)|\bold{q}\cdot\bold{p}=0 \; \text{mod}\; 2\},$ where $\bold{q}\cdot\bold{p}$ denotes the inner product. The set $\mathcal{A}$ is the set of inferred observables. ``Inferred'' means that these observables may not be directly measurable, but they can be inferred by multiple measurements. For example in the case of two qubits, the set $M_r$ and $\mathcal{A}$ are $M_r = \{\mathbb{I}\mathbb{I},\mathbb{I}X,\mathbb{I}Z,X\mathbb{I},Z\mathbb{I},XX,ZZ\},$ and $\mathcal{A}=\{\mathbb{I}\mathbb{I},\mathbb{I}X,\mathbb{I}Z,X\mathbb{I},Z\mathbb{I},XX,ZZ,XZ,ZX,YY\},$ \emph{i.e.} the set of all rebits observables. Notice that we removed the ``$\otimes$'' symbol for the tensor product in order to simplify the notation.
This Wigner function is always non-negative for CSS states (\cite{Delfosse}) and it is covariant. In terms of the definition provided in equation \eqref{wigner}, the function $\gamma$ is $\gamma(\lambda)=0,$   
$\chi(a)=(-1)^a$ and $w^{\gamma(\lambda)}=1.$ This choice guarantees that the phase point operators are Hermitian. However the price to pay for the Hermitianity in this case is the non-factorisability of the Wigner function, \emph{i.e.} the Wigner function is composed by phase-point operators of $n$ qubits that are not given by the tensor products of the ones for the single qubit, \emph{e.g.} $A_r((0,0),(0,0)) \neq A_r(0,0)\otimes A_r(0,0).$

Before we proceed, one may wonder whether the non-factorisability of the Wigner function is necessary to treat the CSS case and preserve the non-negativity and covariance. Here we show that it is not. We define a Wigner function that, we argue, is more in line 
with the construction of Spekkens' model, where the ontic space of $n$ systems is made by the cartesian products of individual systems' subspaces.
The non-negative, covariant and factorisable Wigner function for the CSS theory is built out from the single-qubit phase-point operators \begin{equation}\label{factorisablePPO} A_f(0,0)=\mathbb{I} + X+Z+iY.\end{equation} The phase point operators $ A_f(0,1), A_f(1,0), A_f(1,1)$ are given by applying the Pauli $X,Y,Z$ respectively by conjugation on $A_f(0,0).$ The phase point operators of many qubits are given by tensor products of the ones for single qubits $ A_f(0,0), A_f(0,1), A_f(1,0), A_f(1,1).$ Notice that the phase point operators are not Hermitian; however the allowed observables are only present in their Hermitian part (\emph{e.g.} for the single qubit in $\mathbb{I} + X+Z$).
We now need to prove the following lemma.

\newtheorem{CSSWignerFunctions}[WignerFunctionTransformation]{Lemma}
\begin{CSSWignerFunctions}\label{CSSWignerFunctions}
The Wigner function of Delfosse et al. $W_r(\lambda)=Tr(\rho A_r(\lambda)),$ given by \eqref{DelfosseWf},  is equivalent to the factorisable Wigner function $W_f(\lambda)=Tr(\rho A_f(\lambda)),$ given by \eqref{factorisablePPO}, for any $\rho\in\mathcal{S}_r.$
\end{CSSWignerFunctions}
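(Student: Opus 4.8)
The plan is to show that the two Wigner functions agree on the CSS states by comparing them on the only observables that matter, namely the inferred observables $\mathcal{A}$. The key observation is that $W_r(\lambda)$ and $W_f(\lambda)$ are both built as linear combinations of expectation values $\mathrm{Tr}(\rho\, T(\lambda'))$ of Weyl operators, and for a CSS state $\rho\in\mathcal{S}_r$ only the Weyl operators lying in $\mathcal{A}$ (the inferred CSS observables $T(\lambda')$ with $\bold{q}\cdot\bold{p}=0 \bmod 2$) can have nonzero expectation value: any $T(\lambda')=Z(\bold{p})X(\bold{q})$ with $\bold{q}\cdot\bold{p}=1\bmod 2$ is (proportional to) a tensor factor containing a $Y$ on an odd number of sites, and such an operator anticommutes with some element of the CSS stabilizer group $S_X\times S_Z$, hence has vanishing trace against $\rho$. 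So it suffices to check that the phase-point operators $A_r(\lambda)$ and $A_f(\lambda)$, after discarding all Weyl-operator components outside $\mathcal{A}$, coincide.

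First I would fix $\lambda=(0,0)$ as the base point. For $A_f$, expanding the tensor product $A_f(0,0)^{\otimes n}=(\mathbb{I}+X+Z+iY)^{\otimes n}$ gives a sum over all $T(\lambda')$ with coefficients $i^{(\text{number of }Y\text{'s})}$; restricting to $\mathcal{A}$ means keeping only terms with an even number of $Y$'s per ``$YY$-paired'' structure — more precisely, the surviving Weyl operators are exactly those with $\bold{q}\cdot\bold{p}$ even, and on each such term the power of $i$ is even, so the coefficient is $\pm 1$. A short computation shows this coefficient equals $(-1)^{(\text{something quadratic in }\lambda')}$, matching the sign structure $(-1)^{\langle\lambda,\lambda'\rangle}$ of $A_r(0,0)$ once the convention for $\langle\cdot,\cdot\rangle$ and the reordering $Z(\bold p)X(\bold q)$ versus $Y=iXZ$ are accounted for. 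Then I would handle general $\lambda$ by covariance: both $A_r(\lambda)$ and $A_f(\lambda)$ are obtained from their base points by conjugation by the same Pauli operator $Z(\lambda_Z)X(\lambda_X)$ (this is how $A_f$ was defined, and it is a standard property of $A_r$), and Pauli conjugation permutes $\mathcal{A}$ among itself while only multiplying Weyl operators by $\pm 1$ phases that are identical for the two definitions. Hence agreement at $(0,0)$ on $\mathcal{A}$ propagates to agreement at every $\lambda$ on $\mathcal{A}$, and therefore $\mathrm{Tr}(\rho A_r(\lambda))=\mathrm{Tr}(\rho A_f(\lambda))$ for all $\rho\in\mathcal{S}_r$.

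I expect the main obstacle to be bookkeeping the phases: one must pin down the precise relation between the quadratic form appearing when reordering $X$'s and $Z$'s inside the tensor-expanded $(\mathbb{I}+X+Z+iY)^{\otimes n}$ and the bilinear form $\langle\lambda,\lambda'\rangle$ used by Delfosse et al., and check that on the sublattice $\mathcal{A}$ (where the quadratic ``self-phase'' $\bold q\cdot\bold p$ vanishes mod $2$) the two sign conventions genuinely coincide rather than differing by a global or $\lambda$-dependent sign. A clean way to organize this is to note that $W_r$ and $W_f$ are both valid normalized quasiprobability representations agreeing that $\mathrm{Tr}(A(\lambda))=1$, and that two such representations which assign the same value to $\mathrm{Tr}(\rho\,A(\lambda))$ for all Pauli-reachable states from $|0\cdots0\rangle$ must agree on all of $\mathcal{S}_r$ by linearity and the transitive action of the CSS group; so the computation really reduces to the single identity at $\lambda=(0,0)$ evaluated against the $2^n$ CSS basis states (or equivalently against each $T(\lambda')\in\mathcal{A}$), which is a finite check made uniform by the tensor-product structure of $A_f$.
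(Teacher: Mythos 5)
Your proposal is correct and follows essentially the same route as the paper's proof: both discard the Weyl operators $T(\lambda')$ with $\mathbf{q}'\cdot\mathbf{p}'$ odd (the non-Hermitian, odd-number-of-$Y$ part of $A_f(\lambda)$), which have vanishing expectation on every CSS state, and then identify the remaining part of $A_f(\lambda)$ with $A_r(\lambda)$ term by term on the sublattice $\mathcal{A}$. Your explicit handling of general $\lambda$ via the common Pauli conjugation, and the base-point check that the $iY=ZX$ identity makes all surviving coefficients $+1$ so as to match $(-1)^{\langle\lambda,\lambda'\rangle}$, are simply a more detailed version of the phase bookkeeping that the paper's argument leaves implicit.
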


\begin{proof}
What we need to prove is actually that $A_r(\lambda)=\mathcal{H}(A_f(\lambda)),$ where $\mathcal{H}(A_f(\lambda))$ indicates the Hermitian part of the phase point operator $A_f(\lambda).$ The non-Hermitian part of $A_f(\lambda)$ has zero contribution to the Wigner function. It is always composed of tensors of mixtures of Pauli operators with an odd number of $Y$'s, that never form allowed observables and so are never in the stabilizer group of any $\rho\in\mathcal{S}_r$. This implies that the non-Hermitian part of $A_f(\lambda)$ has no contribution to the Wigner function as Pauli operators (apart from the identity) are traceless. However the non-Hermitian part of $A_f(\lambda)$ is important since when its operators compose into phase point operators for multiple qubits, they sometimes provide Hermitian operators that contribute to the Wigner function.
We know that $A_r(\lambda)$ is defined as the sum of observables $T(\lambda),$ where $\lambda=(\bold{q},\bold{p})$ such that $\bold{q}\cdot\bold{p}=0 \; \text{mod}\; 2.$  We can now see that also $\mathcal{H}(A_f(\lambda))$ is given by the sum of observables subjected to the same condition of having zero inner product between the components. This condition indeed singles out all the rebit observables, which are the only ones we are interested in.
Given an observable $T(\lambda)=Z(\bold{p})X(\bold{q})$ in $A_f(\lambda),$ with $\lambda=(\bold{q},\bold{p})$ and $\bold{q},\bold{p}\in\mathbb{Z}^n_d,$ it is Hermitian if and only if $T(\lambda)=T(\lambda)^{\dagger}.$ This means that \[T(\lambda)^{\dagger}=X(\bold{q})Z(\bold{p})=(-1)^{\bold{q}\cdot \bold{p}}T(\lambda),\] which holds if and only if $\bold{q}\cdot\bold{p}=0 \; \text{mod}\; 2.$ 

\end{proof}

In conclusion, by using one of the above Wigner functions, \eqref{DelfosseWf} or \eqref{factorisablePPO}, given the duality between states and measurement elements, the covariance and lemma \ref{WignerFunctionTransformation}, we can conclude that CSS rebit subtheory is Spekkens-representable. 
Moreover the definition of CSS-preserving transformations guarantees the closure property and the discrete Hudson's theorem for rebits (\emph{i.e.} non-negativity of the Wigner function of a state if and only if it is a CSS state \cite{Delfosse}) guarantees that it is maximal. Therefore the CSS rebit subtheory of quantum mechanics is a maximal Spekkens' subtheory.

\section{Spekkens' subtheories as toolboxes for state-injection}
\label{three}

We now prove that qubit stabilizer quantum mechanics can be obtained from a Spekkens' subtheory, in the sense that within Spekkens' subtheories it is possible to build a state-injection scheme that injects all the objects of qubit stabilizer quantum mechanics that are not in the subtheories. 
We need to understand which objects  we actually need to inject to reach the full qubit stabilizer quantum mechanics from a Spekkens' subtheory. 
Let us start by stating the list of instructions to construct the \textit{maximal} Spekkens' subtheory that corresponds to a given choice of $\gamma$ (in analogy with the framework of \cite{Rauss2}).\footnote{Notice that with the following construction a given $\gamma$ provides the \emph{maximal} Spekkens' subtheory, but the Wigner function from the same $\gamma$ can, obviously, be used to represent any smaller subtheory of the maximal Spekkens' subtheory.}
We recall that the function $\gamma$ uniquely specifies the function $\beta(\lambda,\lambda'),$ defined such that $T(\lambda)T(\lambda')=w^{\beta(\lambda,\lambda')}T(\lambda+\lambda').$ It results that the observable $T(\lambda)$ preserves positivity iff $\beta(\lambda,\lambda')=0\; \forall \; \lambda'\; s.t.\;  [\lambda,\lambda']=0,$ as proven in \cite{Rauss2}.

\begin{enumerate}
\item The function $\gamma$ uniquely defines the set of allowed observables, 
$\mathcal{M}=\{T(\bold\lambda)\;|\;\beta(\lambda,\lambda')=0\; \forall \; \lambda'\; s.t.\;  [\lambda,\lambda']=0\}.$ 
\item The set of allowed states $\mathcal{S}$ is given by the states corresponding to common eigenstates of $d^n$ commuting observables in $\mathcal{M}.$ 
\item The set of allowed gates is $\{U |\; U\rho U^{\dagger}=\rho'\in\mathcal{S} \; \forall \; \rho\in\mathcal{S}, \; and\; W_{U\rho U^{\dagger}}(\lambda)=W_{\rho}(S\boldsymbol{\lambda}+\bold{a})\; \forall \; \lambda\in\Omega\}.$ This is a subset of the Clifford unitaries.
\end{enumerate}

Let us point out that the above construction differs from \cite{Rauss2} in that it does not require tomographic completeness and it does require covariance of the Wigner functions. With respect to the Wallman-Bartlett 8-state model \cite{WallmanBartlet} the difference holds for analogous reasons. The 8-state model consists of a measurement non-contextual ontological model 
for \emph{one} qubit stabilizer quantum mechanics, where the one-qubit quantum states (and measurement elements) are represented as uniform probability distributions over an ontic space of dimension $8$ and the Clifford transformations (generated by the Hadamard $H$ and phase gate $S$) are represented by permutations over the ontic space. In the definition of the one qubit stabilizer distributions both the possible phase-point operators for a Wigner function are considered, \emph{i.e.} both the one with an even number of minuses $A_{+}(0,0)=1 + X + Y + Z,$ and the one with an odd number $A_{-}(0,0)=1 + X + Y - Z.$ See \cite{Galvao} for a more extensive description of these two possible pairs of single qubit Wigner functions.  

In addition to this, the proposed and straightforward generalisation of the 8-state model to more than one qubit, consists of considering the distributions built from the tensor products of the phase-point operators of the single qubit \cite{WallmanBartlet}. The resulting subtheory of qubit stabilizer quantum mechanics described by the model then becomes the one composed by all the product states of tensors of Pauli $X,Y,Z$ observables and all the \emph{local} Clifford unitary gates (generated by local $H$ and $S$). No entanglement is present. Nevertheless it is possible to reach universal quantum computation from this subtheory by performing measurement-based quantum computation \cite{MBQC} with a particular entangled cluster state \cite{Vega}\cite{Rauss2}. 

We have seen that the presence of non-covariant Clifford gates in the framework of Raussendorf et al. and the 8-state model is the main difference with respect to Spekkens' subtheories. Furthermore, the implementation of all the non-covariant Clifford unitaries would boost Spekkens' subtheories, composed by covariant Clifford gates, to the full qubit stabilizer quantum mechanics. 

\newtheorem{Main}{Theorem}
\begin{Main}\label{Main}
Qubit stabilizer quantum mechanics can be obtained from a Spekkens' subtheory via state-injection: all the possible non-covariant Clifford gates can be state-injected via a circuit made of objects in the Spekkens' subtheory.
\end{Main}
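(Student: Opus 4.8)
The plan is to take the Spekkens' subtheory to be the maximal one of section~\ref{CSSsection}, i.e. the CSS rebit subtheory $(\mathcal{S}_r,\mathcal{T}_r,\mathcal{M}_r)$ represented by the covariant, non-negative, factorisable Wigner function~\eqref{factorisablePPO} (the states, observables and gates of figure~\ref{SpekSubToffoli}); recall $\mathcal{T}_r=\langle\bigotimes_i H_i,\,CNOT(i,j),\,X_i,\,Z_i\rangle$. The objects of full qubit stabilizer quantum mechanics that are \emph{not} already in this subtheory are exactly the non-covariant Clifford gates, since once every Clifford gate is available, applying it to the CSS stabilizer states (and conjugating the CSS observables) yields every stabilizer state and every Pauli measurement. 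So it is enough to show that each such gate can be brought into the theory by a Zhou-Leung-Chuang state injection (Definition~\ref{definition}) all of whose circuit elements -- the transversal CNOTs, the Pauli-$Z$ measurements and the correction unitary -- lie in $(\mathcal{S}_r,\mathcal{T}_r,\mathcal{M}_r)$, the only non-free ingredient being the injected resource state itself.

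The first step is to reduce to \emph{diagonal} gates. The $n$-qubit Clifford group is generated by $\mathcal{T}_r$ together with the diagonal Clifford gates, for instance the single-qubit phase gates $S_i$ and the controlled-$Z$ gates $CZ(i,j)$ (as noted in the introduction, injecting $CZ$ also delivers the individual Hadamard $H_i$). Since Definition~\ref{definition} applies verbatim to \emph{any} diagonal unitary, it then suffices to inject an arbitrary diagonal Clifford $U$ using an in-subtheory circuit; every non-covariant Clifford is afterwards realised as a finite composition of free $\mathcal{T}_r$ operations and such injections.

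The core of the argument is a component-by-component check of the injection circuit of Definition~\ref{definition} for a diagonal Clifford $U$. The $n$ transversal CNOTs and the $n$ Pauli-$Z$ measurements are plainly in $\mathcal{T}_r$ and $\mathcal{M}_r$. The sensitive element is the correction $U X^m U^\dagger$: as $U$ is Clifford, conjugation maps the Pauli $X^m$ to a Pauli, so up to an irrelevant global phase $U X^m U^\dagger$ is a product of single-qubit $X_i$ and $Z_i$ gates, each of which is free (e.g. $S X S^\dagger = Y$, realised as $X$ then $Z$; for $U=CZ(i,j)$ one gets a product of $X$'s and $Z$'s). Invoking the Zhou-Leung-Chuang result quoted after Definition~\ref{definition}, the circuit then deterministically produces $U$ acting on the free input, fed with the resource $U|+\rangle^{\otimes n}$, a stabilizer state which is not a free state of the subtheory (for $U=S_i$ or $U=CZ(i,j)$ its Wigner function is negative) and which is precisely the injected object. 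Doing this for all $S_i$ and $CZ(i,j)$ promotes the Spekkens' subtheory to the whole of qubit stabilizer quantum mechanics.

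I expect the main obstacle to lie in the coupling of these two reductions. The correction-gate step works \emph{only} because $U$ is diagonal -- for a non-diagonal Clifford the correction could be a non-Pauli Clifford, which is not free -- so one must be sure that restricting to diagonal generators loses nothing, i.e. that $\langle\mathcal{T}_r,\,S_i,\,CZ(i,j)\rangle$ is indeed the full Clifford group (in particular that the single-qubit Hadamards are recovered from $CZ$ and the free CNOTs). A subsidiary point is bookkeeping in the composite case: the resource states $U|+\rangle^{\otimes n}$ and the intermediate states appearing during a multi-injection are not free, and one must check they are still bona fide stabilizer states on which the subsequent free operations act as intended -- which holds because the connecting circuit is an ordinary stabilizer circuit whose output, by the Zhou-Leung-Chuang analysis, is $U$ applied to the free input.
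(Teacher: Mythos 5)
Your proposal is correct and follows essentially the same route as the paper: inject the diagonal Cliffords $CZ$ and $S$ via the Zhou--Leung--Chuang scheme, observe that the corrections $UX^mU^{\dagger}$ are Pauli and hence free, recover the individual Hadamards from the injected $CZ$ (plus $X$ measurements), and then generate the full Clifford group and with it all stabilizer states and observables. The only difference is that you take the free part to be the full CSS rebit subtheory $(\mathcal{S}_r,\mathcal{T}_r,\mathcal{M}_r)$, whereas the paper deliberately drops the global Hadamard to work in a strictly smaller, minimal Spekkens' subtheory -- a choice that matters for its minimality remark but not for the validity of the theorem.
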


\begin{proof}

In order to prove that qubit stabilizer quantum mechanics can be obtained from a Spekkens' subtheory via state-injection we need to show that all the objects needed for injecting any non-covariant Clifford gate are present in at least one Spekkens' subtheory.  We recall that in order to generate the whole Clifford group we need, in addition to the $CNOT,$ also the generators of the local single gates, \emph{e.g.} the usual phase and Hadamard gates, $S,H.$
Let us consider the following subtheory, which corresponds to the CSS rebit subtheory, equations \eqref{CSSobservables} and \eqref{CSSgates}, with no global Hadamard gates:
\begin{itemize}
\item The allowed observables are, analogously to equation \eqref{CSSobservables}, non-mixing tensors of $X$ and $Z$ Pauli operators, $\mathcal{M} = \{X(\bold{q}),Z(\bold{p})|\bold{q},\bold{p} \in \mathbb{Z}_2^n\}.$ 
\item The allowed gates are the ones generated by the $CNOT$ and the Pauli rotations $X,Z,$ \emph{i.e.} \begin{equation}\label{CSSnoHgates} \mathcal{T}= \left\langle CNOT(i,j),X_i,Z_i\right\rangle.\end{equation}
\item The allowed states are, as usual, the eigenstates of the allowed observables. 
\end{itemize}
This is a smaller subtheory than CSS rebit (the difference being the absence of the global Hadamard gate). It possesses all the objects needed for state-injection of non-covariant gates. The $Z$ observables and the $CNOT$ gate are present. The correction gates are always Pauli gates, as for any injected Clifford unitary $U,$ even when  $U$ is non-covariant, $UX^{\otimes n}U^{\dagger},$ by definition of a Clifford gate, gives back a Pauli gate. All the objects of this subtheory can be non-negatively represented by the Wigner functions for the CSS rebit theory of the previous section and also in Spekkens' toy model, as shown in figure \ref{SpekSubToffoli}. Therefore this subtheory is closed and Spekkens representable, \emph{i.e.} a Spekkens' subtheory, and it is possible for it to reach universal quantum computation via state-injection, as proven in details in the next section.

Here, we will first show that we can obtain the whole Clifford group. Once we have it, we can map any of the allowed states and observables to any in qubit stabilizer quantum mechanics. The whole Clifford group can be achieved by first injecting the $CZ$ gate, as shown in figure \ref{SchemeCZ}, that also provides a construction for the Hadamard gate (figure \ref{HadamardToo}) and then the phase gate $S.$ The correction gate for the state-injection scheme of the $S$ gate is given by the Pauli $Y$ gate, which is present, up to a global phase, in our Spekkens' subtheory as a composition of $X$ and $Z$ Pauli rotations. Notice that, even if  state-injection, as defined in \ref{definition}, allows us to only inject diagonal unitary gates, we can  obtain all the non-diagonal gates because we have a full generating set of gates for Clifford unitaries (via the Hadamard gate). 
The other peculiarity of the Spekkens' subtheory we are considering is that it is minimal, as we cannot remove any object from it without denying the possibility of achieving universal quantum computation. 
It contains only the elements needed for the state injection scheme as defined in \ref{definition}. The other schemes that are not minimal, like in \cite{Howard,Delfosse}, have more components than strictly needed for the state-injection scheme defined in \ref{definition}.
To see that this scheme is minimal, note that if one were to remove the $X(\bold{q})$ observables, it would not be possible to obtain the Hadamard gate. Moreover Spekkens' subtheories where the observables are tensors of a single Pauli operator and gates that preserve the eigenbasis of that operator, do not allow any state-injection scheme to inject objects outside of them. 

\begin{figure}
\centering
{\includegraphics[width=.49\textwidth,height=.21\textheight]{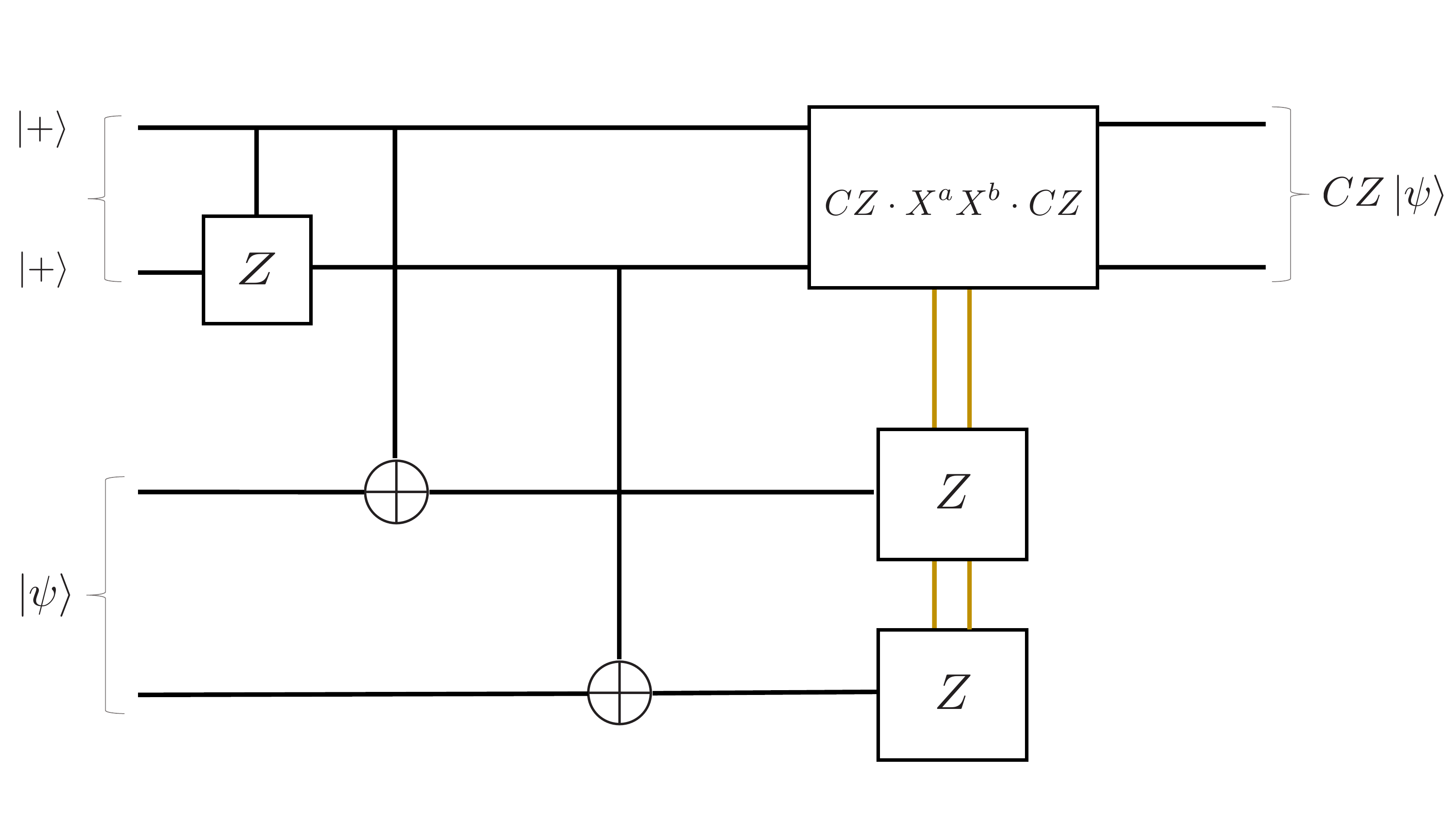}}
\caption[CZ injection.]{\footnotesize{\textbf{CZ injection.} The first injection of our scheme is the injection of the $CZ\left | ++\right \rangle$ state, needed in order to perform the correction in the injection scheme for the CCZ gate and to produce the Hadamard gate. In the figure above the correction is $CZ\cdot X^{a}X^{b}\cdot CZ$ conditioned on obtaining $x,y$ outcomes from the $Z$ measurements, where $(-1)^a=x$ and $(-1)^b=y$. For example if the outcomes are $x=1,y=-1,$ the correction is $CZ \cdot \mathbb{I}X\cdot CZ=XZ,$ which is an allowed gate in our subtheory.}}
\label{SchemeCZ}
\end{figure}
\begin{figure}
\centering
{\includegraphics[width=.4\textwidth,height=.165\textheight]{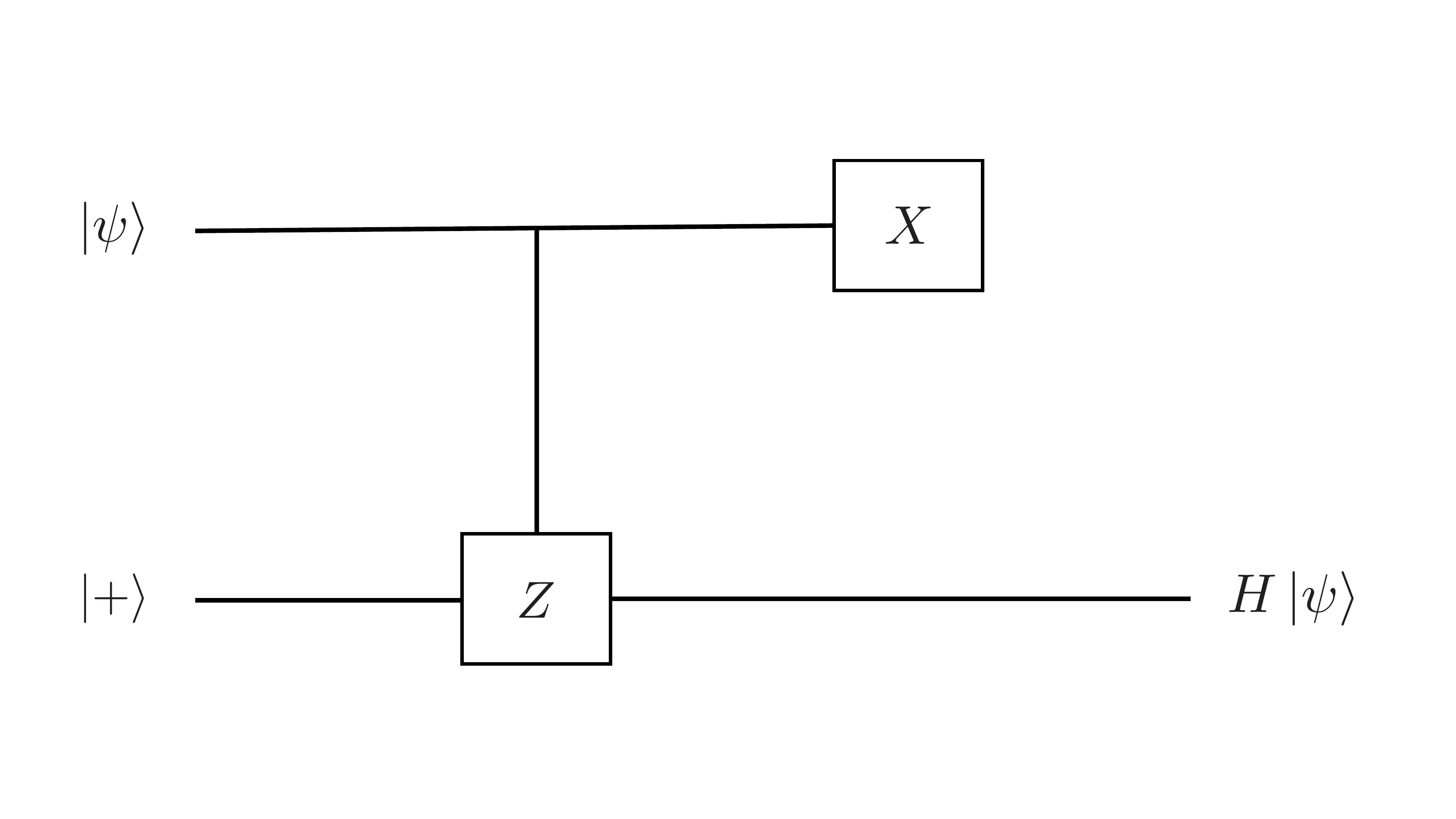}}
\caption[Hadamard gate via CZ.]{\footnotesize{\textbf{Hadamard gate via CZ.} The Hadamard gate can be obtained once the $CZ$ gate is available. Obtaining $H$ from $CZ$ and $X$ measurements was originally shown in \cite{BJS}.}}
\label{HadamardToo}
\end{figure}

\end{proof}

The above theorem guarantees that a state-injection scheme can always be recast in the following structure: $Spekkens' \; subtheory \; + \;Magic \; states\;\rightarrow\; UQC.$ Spekkens' model contains all the tools for state-injection of the non-covariant Clifford gates that appear in qubit stabilizer quantum mechanics. Note that this approach looks at the minimal non-contextual subtheories of stabilizer quantum mechanics where we can still reach universal quantum computation via state-injection. We see that each time these minimal subtheories are Spekkens' subtheories. In the next section we describe how the minimal subtheory of rebit stabilizer quantum mechanics used in the proof of theorem \ref{Main}, equations \eqref{CSSobservables} and \eqref{CSSnoHgates}, can reach universal quantum computation through the injection of CCZ magic states.

\section{State-injection schemes with CCZ states}
\label{four}

Reaching fault-tolerant universal quantum computation by exploiting Toffoli gates, or CCNOT - control control $X,$ goes back to Peter Shor in 1997 \cite{Shor}. It is known that the Toffoli gate (enough for universal classical computation) and Hadamard gate allow universal quantum computation \cite{Shi,Aharonov}, and, in a sense, this is the most natural universal set of gates, since the Toffoli enables all  classical computations, and just by adding the Hadamard gate, which generates superposition, we achieve universal quantum computation. The same result holds if we use the related  $CCZ,$ control control $Z,$ gate instead of the Toffoli gate. Other examples of fault tolerant universal quantum computation involving Toffoli state distillation have been proposed \cite{Eastin,Jones1,Jones2,Paetznick}. 
We here propose a scheme of $CCZ$ injection with the fewest possible objects in the free part of the computation. We reach that by also injecting the $CZ$ state before the CCZ.
The state-injection scheme is depicted in figure \ref{Scheme1}. 

\begin{figure}
\centering
{\includegraphics[width=.45\textwidth,height=.2\textheight]{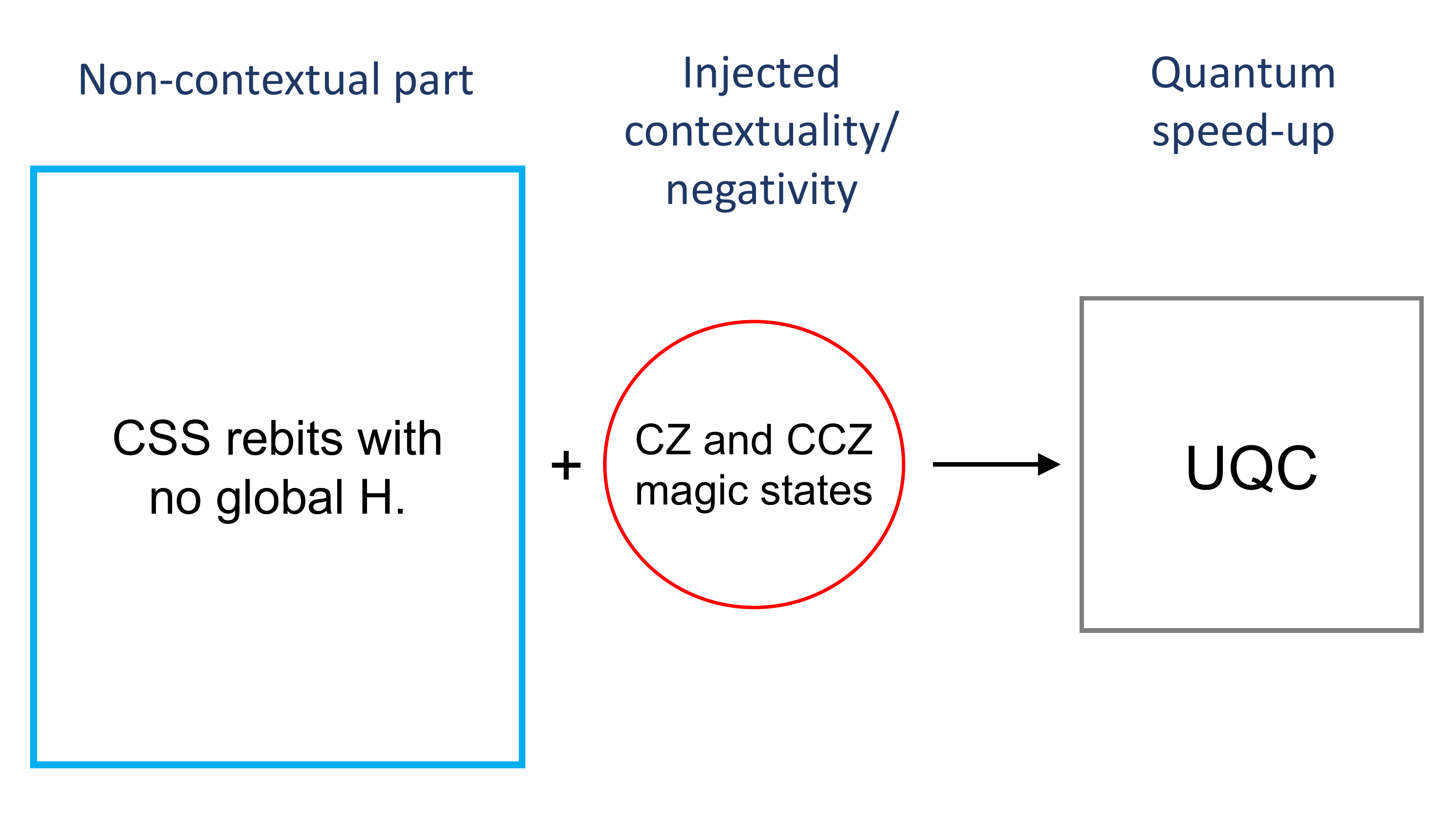}}
\caption[Novel state-injection scheme based on CCZ injection.]{\footnotesize{\textbf{Novel state-injection scheme based on CCZ injection.} By injecting first the $CZ$ state and then the $CCZ$ state we can boost the subtheory of rebit stabilizer quantum mechanics made of observables that are tensors of non-mixing Pauli $\mathbb{I},X,Z,$ and the gates generated by $CNOT$ and the Pauli rotations $X,Z$ to universal quantum computation.}}
\label{Scheme1}
\end{figure}

The free part is the one described in the proof of theorem \ref{Main} and defined by the equations \eqref{CSSobservables} and \eqref{CSSnoHgates}. 
This is the subtheory of CSS rebits with no global Hadamard, thus it is a Spekkens' subtheory. The state-injection scheme uses two state-injections: first the state $CZ\left | ++\right \rangle$ (figure \ref{SchemeCZ}), where the correction is given by the $CZ\cdot X^{a}X^{b}\cdot CZ$ conditioned on obtaining $x,y$ outcomes from the measurements of $Z$'s, where $(-1)^a=x$ and $(-1)^b=y$. Just to give an example, for outcomes $x=1,y=-1$ the correction is $CZ \cdot \mathbb{I}X\cdot CZ=XZ.$ 
Secondly, the injection of the state $CCZ\left | +++\right \rangle$ (figure \ref{SchemeCCZ}), where the correction is given by $CCZ\cdot X^{a}X^{b}X^{c}\cdot CCZ,$  with outcomes  $x,y,z$ of the measurments of $Z$'s such that $(-1)^a=x, \;(-1)^b=y$ and $(-1)^c=z,$ \emph{e.g.}  $CCZ\cdot X\mathbb{I}\mathbb{I}\cdot CCZ= X\cdot CZ$ if the outcomes are $-1,1,1.$ 
Notice that the injection of the $CZ$ also allows us to obtain the Hadamard gate, as shown in figure \ref{HadamardToo}. With Hadamard and CCZ gates we then have a universal set for quantum computation. The contextuality, which is not present in the subtheory of CSS rebits, is clearly present after the two injections that lead to universal quantum computation. 
\begin{figure}
\centering
{\includegraphics[width=.49\textwidth,height=.21\textheight]{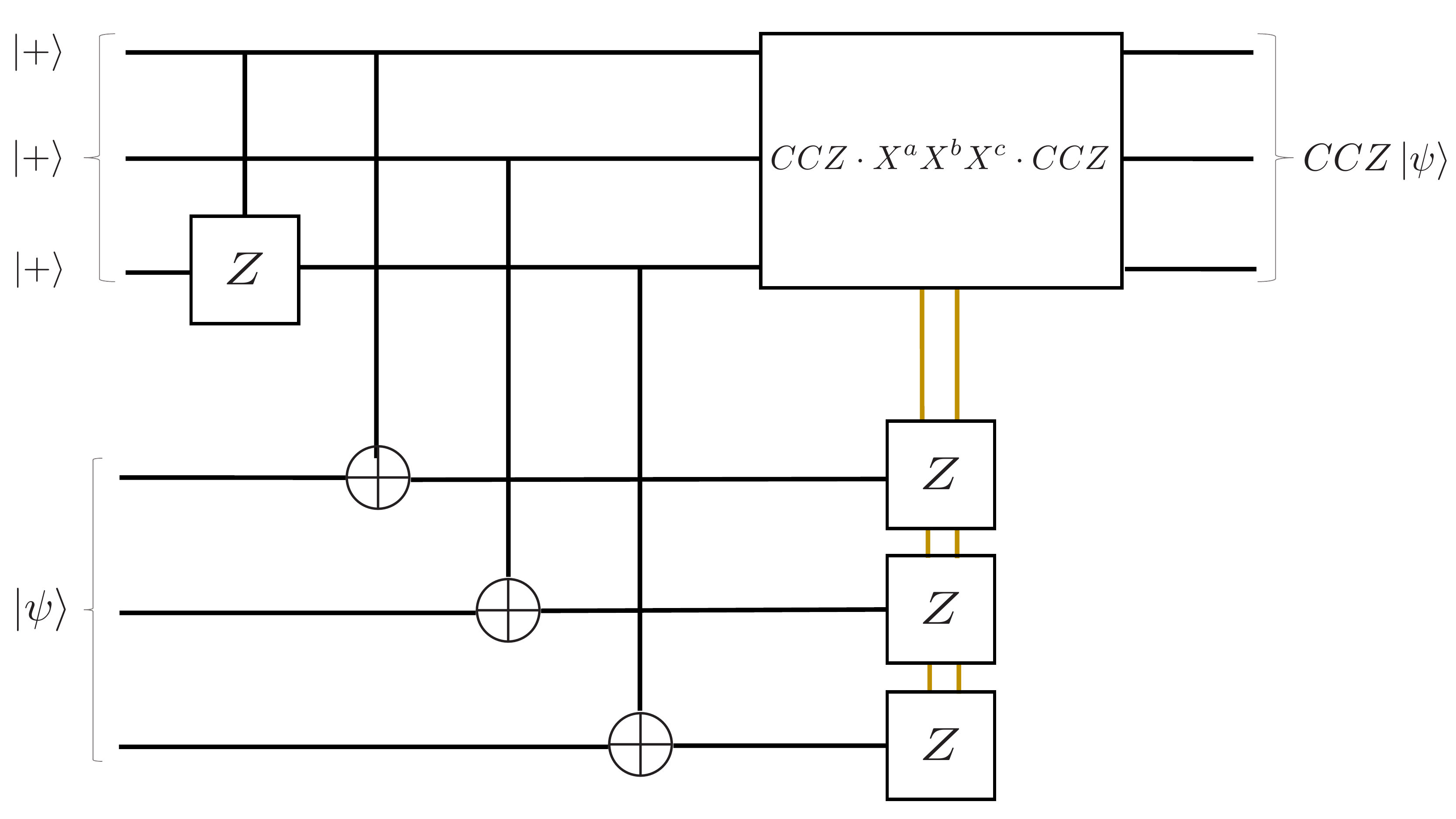}}
\caption[CCZ injection.]{\footnotesize{\textbf{CCZ injection.} The second injection of our scheme is the injection of the $CCZ\left |+++\right \rangle$ state. In the figure above the correction is $CCZ\cdot X^{a}X^{b}X^{c}\cdot CCZ$ conditioned on obtaining $x,y,z$ outcomes from the $Z$ measurements, where $(-1)^a=x,(-1)^b=y$ and $(-1)^c=z$. For example if the outcomes are $x=-1,y=1,z=1$ the correction is $CCZ\cdot X\mathbb{I}\mathbb{I}\cdot CCZ= X\cdot CZ,$ which is an allowed gate in our subtheory.}}
\label{SchemeCCZ}
\end{figure}

A few comments on the free part of the scheme are needed. As already said, it is \emph{minimal}, in the sense that it is not possible to remove any object from the free part of the computation without denying the possibility of obtaining universal quantum computation via state-injection. Also it is a strict subtheory of the CSS rebit, where we allow all the same objects apart from the global Hadamard. We argue that this is desirable, since in principle the Hadamard gate is a local gate; we want to keep only the entangling gates to have a global nature. 

\section{Proofs of Contextuality and state-injection}
\label{five}

The Spekkens' subtheory used for the CCZ state-injection scheme of the previous section allows us to establish a relation between the different resources injected and different proofs of contextuality. It is well known that within qubit stabilizer quantum mechanics we can obtain the Peres-Mermin square argument, which is a proof of state-independent contextuality, and the GHZ paradox, which is a proof of state-dependent contextuality \cite{Mermin,Peres,GHZ}. These proofs are not present within the Spekkens' subtheory, which, as we know, always witnesses the absence of any form of contextuality. We now explicitly show how the Peres-Mermin square and GHZ-paradox are obtained after the injection of either the $CZ$ gate or the $S$ gate. 
We also show that injection of the important non-Clifford $\frac{\pi}{8}$ gate $T$, in addition to the Peres-Mermin square and GHZ paradox (provided that we can apply the $T$ gate at least two times, as $T^2=S$), enables maximum violation of the CHSH inequality. These examples demonstrate that specific states injected to the minimal Spekkens' subtheory here considered, even if they do not provide universality, can be considered resources for specific and distinct manifestations of contextuality.

\begin{itemize}
\item \emph{Peres-Mermin square.} Let us consider the free Spekkens' subtheory of the CCZ injection scheme supplemented with the injection of the $CZ$ state. This allows us to construct a circuit to perform the Peres-Mermin square argument \cite{Mermin}. Let us first recall that the Peres-Mermin square (shown below) is one of the most intuitive and popular ways to illustrate the notion of Kochen-Specker contextuality.  

\begin{figure}[h!]
\centering
{\includegraphics[width=.33\textwidth,height=.14\textheight]{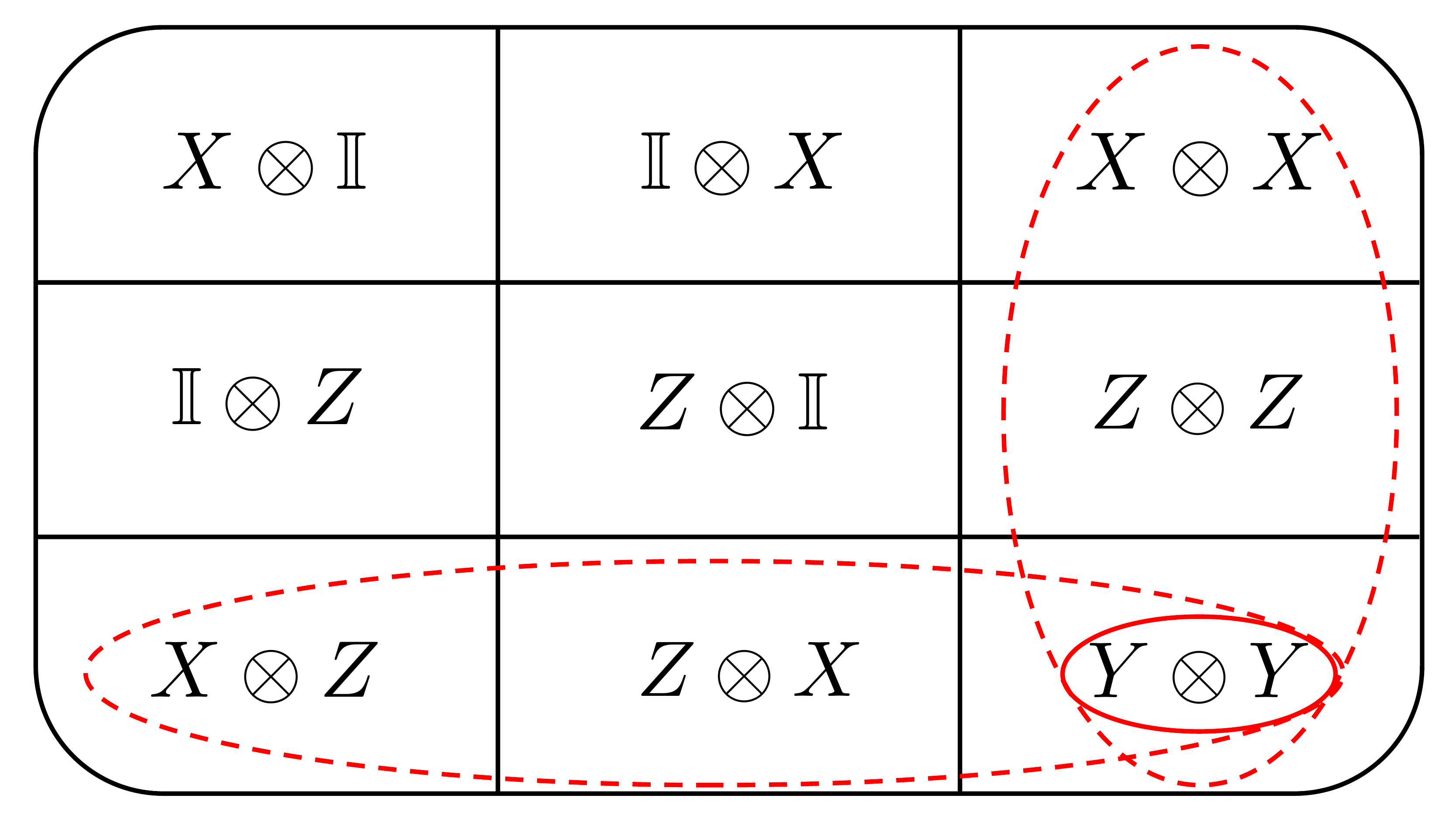}}
\label{MerminSquare}
\end{figure}

The square is composed by nine Pauli observables on a two-qubit system.\footnote{Notice that we will not write again the $"\otimes"$ symbol for the tensor product in order to soften the notation.} 
Each row and each column is composed by commuting (simultaneously measurable) observables. With the assumption that the functional relation between commuting observables is preserved in terms of their outcomes (\emph{e.g.} if an observable $C$ is the product of two observables $A,B$, also its outcome $c$ is the product of the the outcomes $a,b$ of $A,B$) and the outcome of each observable does not depend on which other commuting observables are performed with it (non-contextuality), the square shows that it is impossible to assign the outcome of each observable among all the rows and columns without falling into contradiction. For example, if we start by assigning values, say $\pm 1,$ to the observables starting from the first (top left) row on, the contradiction can be easily seen when we arrive at the last column and last row (red circles), that bring different results to the same observable $YY,$ as witnessed by the following simple calculation, $(XZ)\cdot (ZX) = YY,$ and $(XX)\cdot (ZZ) = -(YY).$ 
Kochen-Specker contextuality refers to the fact that the outcome of a measurement does depend on the other compatible measurements that we perform with it (\emph{i.e.} on the contexts). 

While in our original Spekkens' subtheory we are only allowed to perform the observables in the first two rows of the square, with the presence of the $CZ$ we can obtain the last row too, since $CZ\cdot X\mathbb{I} \cdot CZ= XZ,$ $CZ\cdot \mathbb{I}X \cdot CZ= ZX$ and $CZ\cdot XX \cdot CZ= YY.$  Figure \ref{CZMermin} shows a circuit where we can perform all the contexts of the Peres-Mermin square on an arbitrary input state $\left |\psi\right\rangle$ by just using objects belonging to the Spekkens' subtheory and CZ injections.

\begin{figure*}[]
\centering
{\includegraphics[width=.9\textwidth,height=.34\textheight]{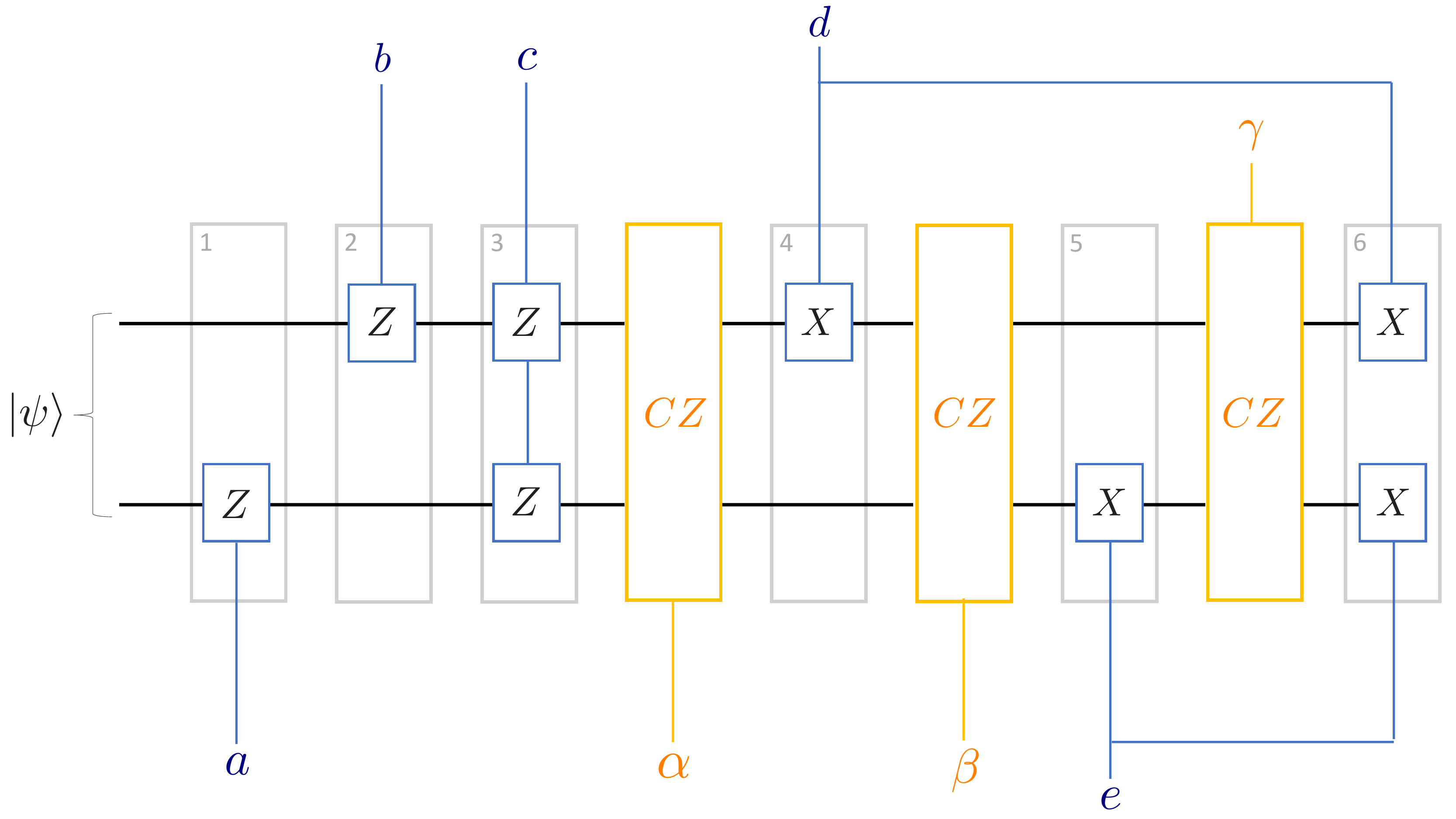}}

\caption[Peres-Mermin square via Spekkens' subtheories and CZ gates.]{\footnotesize{\textbf{Peres-Mermin square via Spekkens' subtheories and CZ gates.} The above circuit provides a way of implementing all the contexts of the Peres-Mermin square. Each block denoted by $CZ$ corresponds to the injection scheme of figure \ref{SchemeCZ} endowed also with a swap gate (which is present in our Spekkens' subtheory as it can be made of a series of three alternated CNOT gates) in order to set the output state $CZ\left |\psi\right\rangle$ as a precise modification of the input state $\left |\psi\right\rangle$ (and not of the ancillary resource state $CZ\left |++\right\rangle$). Each context can be selected according to some combinations of the classical control bits $a,b,c,d,e,\alpha,\beta,\gamma$ that can take values in $\{0,1\}.$ The value $0$ indicates that the corresponding gate is not performed, while the value $1$ that it is performed. At the end of every grey block (labelled by numbers) we assume that we can read the output outcome. The three row contexts of the Peres Mermin square are identified by the variables $(d,e),$ $(a,b,c)$ and $(\alpha,\beta,\gamma,d,e)$ assuming value one, respectively. The three column contexts are identified by $(a,d,\gamma)$ $(b,e,\gamma)$ and $(c,d,e,\gamma).$ Notice that in the last case where we implement the context $XX,ZZ,YY,$ the measurement of XX is implemented by performing $\mathbb{I}X$ first and then $X\mathbb{I},$ and in this case we consider the outputs related to the blocks labelled by $3,5,6.$}}

\label{CZMermin}
\end{figure*}

We can obtain the Peres-Mermin square argument also via the injection of the $S$ gate. This time the observables considered in the square are $\mathbb{I}X,X\mathbb{I}X,XX,Y\mathbb{I},\mathbb{I}Y,\mathbb{I}Y,YX,XY,ZZ.$ The ones containing $Y$ can be obtained by applying $S$ to the $X$ observable, while the others are already present in our Spekkens' subtheory.

\item \emph{GHZ paradox \cite{GHZ}.} In order to obtain the GHZ paradox we need to be able to create the GHZ state $\frac{\left |000  \right\rangle+ \left |111 \right\rangle}{\sqrt{2}},$ already present in our Spekkens' subtheory, and measure the mutually commuting observables $XXX,XYY,YXY,YYX.$ The GHZ state is the common eigenstate of these four operators, with the eigenvalues being $+1, -1, -1, -1$ respectively. While the first observable $XXX$ is already present in our Spekkens' subtheory, the others can be obtained either by local $S$ gate or $CZ$ gate on two of the three single Pauli operators composing each observable. By considering these observables, the quantum predictions are in conflict with any non-contextual hidden variable model that assigns definite pre-existing values, $+1$ and $-1,$ to the local Pauli observables $X,Y.$  Let us denote these definite values as $\lambda_{x1},\lambda_{x2},\lambda_{x3},\lambda_{y1},\lambda_{y2},\lambda_{y3}$ in correspondence of each local Pauli $X$ and $Y.$ The product of the three observables $XYY,YXY,YYX,$ that must yield the outcome $-1,$ in the hidden variable model means the following expression $\lambda_{x1}\lambda_{x2}\lambda_{x3}\lambda_{y1}^2\lambda_{y2}^2\lambda_{y3}^2=-1.$ However this is in neat contradiction with the outcome of $XXX$ which is $+1,$ and corresponds to $\lambda_{x1}\lambda_{x2}\lambda_{x3}=+1.$

\item \emph{CHSH argument \cite{CHSH}.} If we consider our Spekkens' subtheory with the addition of the $T$ gate we can obtain the maximum violation of the CHSH inequality. In the CHSH game a referee asks questions $x,y\in\{0,1\}$ to Alice and Bob respectively, who agree on a strategy beforehand to then answer $a,b\in\{0,1\}$ respectively. They win the game if $xy=a\oplus b,$ where the sum is meant to be modulo 2. The best classical strategy for them consists of always answering $a=b=0,$ which means winning the game with a probability of $75\%.$ By exploiting quantum states and measurements they can do better than that. It results that if they share the Bell state  $\frac{\left |00 \right\rangle- \left |11 \right\rangle}{\sqrt{2}},$ which is a $-1$ eigenstate of $XX$ and $+1$ eigenstate of $ZZ,$ and they perform the appropriate observables $A_q,B_q$ (depending on which question $q$, $0$ or $1,$ the referee asks them) they can win the game with the maximum quantum probability of about $85\%.$ Notice that the Bell state that we consider is present in our Spekkens' subtheory. 
The observables, which are provided by the presence of the $T$ gate, are $A_0=Y, \; A_1=X, \; B_0=TYT^{\dagger}=\frac{Y-X}{\sqrt{2}},\; B_1=TXT^{\dagger}=\frac{X+Y}{\sqrt{2}}.$ 
The probability that Alice and Bob win minus the probability that they lose is $\frac{1}{4}(\left\langle A_0B_0 + A_0B_1 + A_1B_0 - A_1B_1\right\rangle)=\frac{1}{\sqrt{2}},$ as $\left\langle A_0B_0\right\rangle=\left\langle A_0B_1\right\rangle=\left\langle A_1B_0\right\rangle=-\left\langle A_1B_1\right\rangle=\frac{1}{\sqrt{2}}.$ Therefore the probability of winning is $\frac{1}{2}+\frac{1}{2\sqrt{2}}\approx0.85.$
\end{itemize}


\section{Conclusions}

In this work we studied the subtheories of Spekkens' toy theory that are compatible with quantum mechanics, in the sense of making the same operational prediction. We identified these as the closed subtheories within Spekkens' toy theory that have non-negative and covariant Wigner function representations. Stabilizer quantum mechanics is the maximal Spekkens' subtheory in odd dimensions, as it corresponds to the full Spekkens' theory. This is not true for qubits, as stabilizer quantum mechanics is contextual and the toy theory does not reproduce its statistics. 
We used Spekkens' subtheories as a framework to describe the known examples of state-injection schemes for quantum computation on qudits of odd prime dimensions \cite{Howard} and rebits \cite{Delfosse} with contextuality as an injected resource, in the sense that they fit into the scheme of figure \ref{Ours}, \emph{i.e.} $ Spekkens' \; subtheory\; + \;Magic \; states\;\rightarrow\; UQC,$ where Spekkens' subtheories embody the non-contextual part of the computational model and the contextuality arises in the injection of the magic states. 
Furthermore, we showed that Spekkens' subtheories plus state-injection allow us to achieve all state-injection schemes where free operations are from the Clifford group (the standard state-injection schemes studied in the fault-tolerance literature). Theorem \ref{Main} proves that all of qubit stabilizer quantum mechanics can be obtained from a Spekkens' subtheory via state-injection, and  thus other state-injection schemes of quantum computation with Clifford group gates as the free operations can be mapped to our framework via a sequence of injections. 
In order to prove theorem \ref{Main} we constructed a Spekkens' subtheory which is a strict subtheory of the CSS rebit theory (used in \cite{Delfosse}) and provided a novel state-injection scheme to reach universal quantum computation by injections of $CZ$ and $CCZ$ states. This subtheory is minimal, meaning that it is not possible to remove any object from it without denying the possibility of reaching universal quantum computation via state-injection. 
By analysing the different injection processes in the above scheme we also associated different proofs of contextuality to specific state-injections of non-covariant gates. In particular we explicitly showed how the $CZ$ and $S$ gates are resources for the Peres-Mermin proof of contextuality and the GHZ paradox, and how the $T$ gate, used in the most popular state-injection schemes \cite{BravijKitaev}, is a resource  for maximal violations of the CHSH inequality. 

With respect to previous related works, we often referred to Raussendorf et al's framework \cite{Rauss2}, since this is very general and includes, for example, also the subtheory of qubit stabilizer quantum mechanics that arises from the 8-state model  of Wallman and Bartlett \cite{WallmanBartlet}. Raussendorf et al's framework differs from ours in that it does require tomographic completeness and it does not demand covariant Wigner functions.
In our framework, we  preferred to use the tools provided by Spekkens' toy theory, which has a less abstract structure, being an intuitive and fully non-contextual hidden variable model. 
The state-injection schemes we considered are the ones developed in \cite{State-Injection}, which means that we are not considering more general ways to use states as a resource such as the cluster state computation, considered in \cite{Rauss2}. An open question is how to extend theorem \ref{Main} to these more general schemes. A suggestion in this direction comes from the example of cluster state computation provided in \cite{Vega} and \cite{Rauss2}. It consists of a non-contextual free subtheory made of tensors of $X,Y,Z$ Pauli observables, their product eigenstates and all the local Clifford gates, and the resource is a specific entangled cluster state. The free subtheory in this case is not a Spekkens' subtheory, as the $S$ and $H$ gates are not covariant if we allow all the product eigenstates of tensors of $X,Y,Z$ Pauli observables. However if we remove these local gates we can still implement the computational scheme (which never needs to use those gates) and obtain universal quantum computation with the same resource state. In the latter case the free part is now a Spekkens' subtheory. Therefore this example can be actually recast in our framework of $Spekkens' \; subtheory\;+\; Magic\;state\rightarrow UQC.$
Finally we point out that all previous works \cite{Vega,Rauss2,WallmanBartlet} look at the biggest non-contextual subtheories of stabilizer quantum mechanics that have state-injection schemes with contextuality as a resource. Here instead we focus also on the smallest free subtheories from which it is still possible to reach universal quantum computation via state-injection.

We believe that the results presented here suggest some future projects. The central role of covariance in our work suggests that its relationship with non-contextuality deserves further study. A recent work on contextuality in the cohomological framework could provide the right tools to address this question \cite{Okay}. In particular, covariance seems to be strictly related to Spekkens' transformation non-contextuality \cite{Spek3}. As an example of this, the single qubit stabilizer quantum mechanics, already argued to be not covariant, shows transformation contextuality (even if a preparation and measurement non-contextual model for it - \emph{e.g.} the 8-state model - exists) \cite{Piers}.
One significant question which remains open is to clarify  which notion of contextuality is the best one to capture resources for universal quantum computation as, for example, it is known that qubit stabilizer quantum mechanics, despite being contextual, is efficiently classically simulatable \cite{GottesKnill}. It would be desirable to match the notion of non-classicality in quantum foundations, namely contextuality, with the notion of non-classicality in quantum computation, \emph{e.g.} non-efficient classical simulatability. 
Finally we think that it would be interesting to extend Spekkens' toy theory to obtain a psi-epistemic ontological model \cite{Harrigan} of $n-$qubit stabilizer quantum mechanics. Possibly some extensions of the 8-state model can achieve this. If so, the  epistemic restrictions on which such models were built might be of independent interest. Such work would contribute to further understanding the subtle relationship between contextuality and the computational power of universal quantum computation.

\section{Acknowledgements}

We would like to thank Robert Raussendorf, Juan Bermejo-Vega, Piers Lillystone, Hammam Qassim, Nicolas Delfosse, Cihan Okay, Joel Wallman and Robert Spekkens for helpful discussions. We would also like to thank Nadish De Silva for useful considerations at the first stage of the project. This work was supported by EPSRC Centre for Doctoral Training in Delivering Quantum Technologies [EP/L015242/1].


\end{document}